\xpatchcmd{\@ssect@ltx}{\@xsect}{\protected@edef\@currentlabelname{#8}\@xsect}{}{}
\xpatchcmd{\@sect@ltx}{\@xsect}{\protected@edef\@currentlabelname{#8}\@xsect}{}{}
\newtheorem{theorem}{Theorem}
\newtheorem{corollary}{Corollary}
\newtheorem{definition}{Definition}
\newtheorem{proposition}{Proposition}
\newtheorem{conjecture}{Conjecture}
\newcommand{\mc}{\mathcal}
\newcommand{\mr}{\mathrm}
\definecolor{darkblue}{rgb}{0.0, 0.0, 0.55}
\definecolor{amethyst}{rgb}{0.6, 0.4, 0.8}
\definecolor{ao}{rgb}{0.0, 0.0, 1.0}
\definecolor{amber}{rgb}{1.0, 0.75, 0.0}
\definecolor{amaranth}{rgb}{0.9, 0.17, 0.31}
\begin{document}

\title{The grammar of the Ising model: A new complexity hierarchy}

\author{Tobias Reinhart}
\email{tobias.reinhart@uibk.ac.at}
\author{Gemma De les Coves}
\affiliation{Institute for Theoretical Physics, University of Innsbruck, Technikerstr.\ 21a,\ A-6020 Innsbruck, Austria}

\begin{abstract}
How complex is an Ising model? Usually, this is measured by the computational complexity of its ground state energy problem. Yet, this complexity measure only distinguishes between planar and non-planar interaction graphs, and thus fails to capture properties such as the average node degree, the number of long range interactions, or the dimensionality of the lattice. Herein, we introduce a new complexity measure for Ising models and thoroughly classify Ising models with respect to it. Specifically, given an Ising model we consider the decision problem corresponding to the function graph of its Hamiltonian, and classify this problem in the Chomsky hierarchy. We prove that the language of this decision problem is (i) regular if and only if the Ising model is finite, (ii) constructive context free if and only if the Ising model is linear and its edge language is regular, (iii) constructive context sensitive if and only if the edge language of the Ising model is context sensitive, and (iv) decidable if and only if the edge language of the Ising model is decidable. We apply this theorem to show that the 1d Ising model, the Ising model on generalised ladder graphs, and the Ising model on layerwise complete graphs are constructive context free, while the 2d Ising model, the all-to-all Ising model, and the Ising model on perfect binary trees are constructive context sensitive. 
This work is a first step in the characterisation of physical interactions in terms of grammars.
\end{abstract}

\keywords{}

\maketitle

\section{Introduction}
\label{sec:intro}

\begin{figure*}[t]\centering
    \includegraphics[width=.9\textwidth]{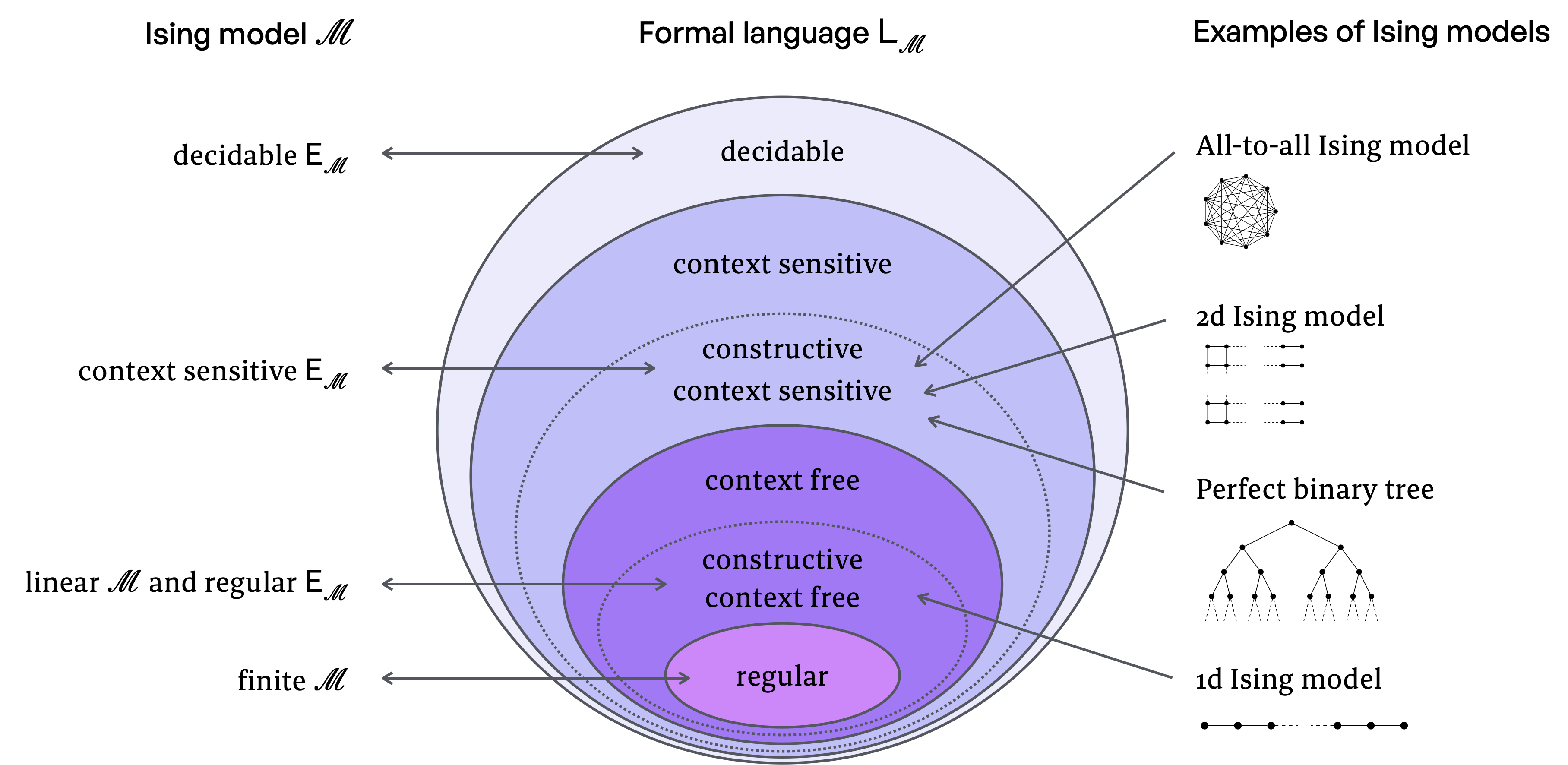}
    \caption{(Left and Middle)
We fully classify the properties of an Ising model $\mc{M}$ that determines the complexity  of its language ${\sf L}_{\mc{M}}$ in the refined Chomsky hierarchy  (\cref{thm:main}).  
    (Right) 
We apply this classification to show   that the language of the 
    1d Ising model is constructive context free;
    the language of the Ising model on perfect binary trees, 
    2d lattices, 
    and all-to-all interaction graphs is constructive context sensitive (\cref{sec:examples}). 
    }
    \label{fig:mainresult}
\end{figure*}

Spin models are a powerful tool to model complex systems.
While the paradigmatic spin model, the Ising model \cite{Is25,Pa96,Mc13}, was originally proposed as a stripped-off model of magnetism, it has since been used in a remarkable variety of settings, including 
as a toy model of matter in certain quantum gravity models \cite{Am09a}, 
to model gases (via so-called lattice gas models) \cite{Ch87}, 
in knot theory (via the connection of the Jones polynomial with the partition function of the Potts model in a certain parameter regime) \cite{Ka01}, 
for artificial neural networks  (stemming from Hopfield's proposal) \cite{Ho82,Am85}, 
in ecology (e.g.\ to model the size of canopy trees) \cite{So00}, 
to model flocks of birds \cite{Bi12}, 
viruses as quasi-species \cite{An83,Ta92}, 
genetic interactions \cite{Le06}, 
for protein folding \cite{Ba03,Ek13,Le86,Le87} (together with its generalisation, the Potts model \cite{Ri20}),  
for economic opinions, urban segregation and language change  \cite{St08}, 
for random language models \cite{De19f}, 
social dynamics \cite{Ca09}, 
earthquakes \cite{Ji07}  
and the US Supreme Court \cite{Le15c}, to name some. 
The relevant questions differ for each of these applications---e.g.\ for artificial neural networks, one is interested in a ``driven'' Ising model, where the parameters are updated (corresponding to learning) and one may study convergence rates, 
whereas in complex systems \cite{So00}, one may be interested in the behaviour of the Ising model at criticality. 
Whatever the focus may be, the fact is that this very simplified model provides insights into very different problems. 

Depending on the application, the Ising model is considered on different families of interaction graphs, 
such as lattices of a certain dimensionality for magnetism, 
or (layerwise) complete graphs for artificial neural networks. 
How does one characterise these different Ising models?
In particular, how can we measure the complexity of  Ising models on different families of graphs? 
Traditionally, this is measured by the computational complexity of the ground state energy problem (GSE), which asks:
\begin{quote}
Given an interaction graph for $n$ spins and an integer $k$, 
does there exist a spin configuration with energy below $k$?  
\end{quote}
For the Ising model without fields, if the family of interaction graphs is planar, this problem is in P, 
and if it is non-planar, it is NP-complete \cite{Bar82,Is00}\footnote{P (NP) is the class of decision problems solvable in polynomial time by a (non)-deterministic Turing machine.}.  
These results have given rise to strong and fruitful ties between spin models (and, more generally, statistical mechanics) and computational complexity \cite{Mo11,Me09}. 
For example, one can formulate many NP problems in terms of the GSE \cite{Lu14b}.

Yet, this measure is very coarse: 
It only classifies Ising models depending on whether they are defined on planar or non-planar graphs (resulting in a two-level hierarchy of P and NP-complete, respectively). 
It is insensitive to the dimensionality of the interaction graph (when considering lattices), 
the number of long range interactions, 
or the average node degree. 
This might be due to the facts that GSE only `cares' about the low energy sector of the model, 
and that computational complexity tends to gloss over polynomial factors. 
Clearly, a 1d Ising model has a different local structure than a 2d Ising model, yet this distinction is invisible in the traditional measure. 
Can one devise a measure that captures the complexity of the local structure of an Ising model?

In this work, 
we introduce a new complexity measure for Ising models 
and thoroughly classify them in it. 
We do so in three steps. 
First, given an Ising model $\mc{M}$ we define its language ${\sf L}_{\mc{M}}$ which encodes the function graph of its Hamiltonian $H_{\mc{M}}$, that is, the set of all pairs of spin configurations and their energy, 
\begin{equation}
{\sf L}_{\mc{M}} = \{(x,H_{\mc{M}}(x)) \mid x \textrm{ is a spin configuration of } \mc{M}\} 
\end{equation}
Second, we consider the problem of deciding ${\sf L}_{\mc{M}}$, that is: 
\begin{quote}
Given $(x,E)$, is $x$ is a valid spin configuration of $\mc{M}$ and is $E = H_{\mc{M}}(x)$?
\end{quote}
Third, we measure the hardness of this problem by classifying ${\sf L}_{\mc{M}}$ in a (refined) Chomsky hierarchy.
This answers the question: 
\begin{quote}
What is the simplest type of grammar (automaton) that generates (accepts) ${\sf L}_{\mc{M}}$? 
\end{quote}
In this sense, our complexity measure captures the minimal complexity of local rules (of the grammar or automaton) that suffices to reproduce the Hamiltonian of an Ising model.

Note that in contrast to this, computational complexity does not capture the complexity of local rules but rather the resources (in time, space or non-determinism) that a Turing machine needs to recognise the language. 
Since the language of various Ising models can be decided in polynomial time by a deterministic Turing machine, 
computational complexity is insufficient to distinguish among their local structures, 
whereas the Chomsky hierarchy can achieve that.  

The refinement of the Chomsky hierarchy stems from posing restrictions on the automata that accept context free and context sensitive languages, 
resulting in two levels called constructive context free and constructive context sensitive (\cref{fig:mainresult}), 
which we conjecture to coincide with context free and context sensitive, respectively.  

We prove that (\cref{thm:main}): 
\begin{enumerate}[label=(\roman*)]
    \item ${\sf L}_{\mc{M}}$ is regular if and only if $\mc{M}$ is finite;
    \item ${\sf L}_{\mc{M}}$ is constructive context free if and only if 
    $\mc{M}$ is linear and ${\sf E}_{\mc{M}}$ is regular;
    \item ${\sf L}_{\mc{M}}$ is constructive context sensitive if and only if 
    ${\sf E}_{\mc{M}}$ is context sensitive; and 
    \item ${\sf L}_{\mc{M}}$ is  decidable if and only if 
    ${\sf E}_{\mc{M}}$ is decidable. 
\end{enumerate}
This classification fully characterises the complexity of ${\sf L}_{\mc{M}}$ in terms of 
properties of the interaction graphs of  $\mc{M}$. 
Specifically, the edge language ${\sf E}_{\mc{M}}$ encodes which spins interact, 
and its complexity captures how difficult it is to decide whether two spins interact or not. 
The remaining properties of $\mc{M}$ (being finite or linear) encode how the number of interactions grows with the system size. 

We then apply this classification to common families of interaction graphs, and show that  (\cref{fig:mainresult2}): 
\begin{enumerate}[label=(\roman*)]
\item  The language of the 
1d Ising model with open or periodic boundary conditions, 
the Ising model on ladder graphs, 
and the Ising model on layerwise complete graphs 
is constructive context free. 
All of these Ising models are linear, and their edge language is regular. 

\item The language of the 
Ising model on perfect binary trees
and the 2d Ising model 
is constructive context sensitive. 
All of these Ising models are linear, and their edge language is context sensitive.

\item 
The language of the all-to-all Ising model 
is constructive context sensitive.  
This Ising model is not linear and its edge language is regular. 

\end{enumerate}

Note that a similar approach was recently proposed in \cite{St21}, yet for a more general definition of spin model which only achieves a partial characterisation in the Chomsky hierarchy. 
The present focus on the Ising model allows us to promote the partial characterisation to a thorough classification. 
In particular, 
this work identifies which properties of the interaction graphs play a role in the complexity of the model, 
and specifies how they interact (metaphorically) to increase the complexity.  

This paper is structured as follows. 
In \cref{sec:stage} we define the new complexity measure for Ising models. 
In \cref{sec:mainresult} we define several properties of Ising models and state our main result, \cref{thm:main}, which we prove in \cref{sec:proofmain}.
In \cref{sec:examples} we apply our main result to obtain the complexity of some well-known examples of Ising models, and in \cref{sec:conclusions} we conclude and present an outlook. 
In \cref{ssec:FormalLanguage} we list some basic definitions and results from formal language and automata theory.

\section{A new complexity measure for Ising models}
\label{sec:stage}
Here we define the new complexity measure for Ising models. 
First, in \cref{ssec:formal-lan} we provide an intuitive introduction to formal language theory.
In \cref{ssec:Isingmodel} we explain how the function graph of the Hamiltonian of an Ising model $\mc{M}$ is encoded as a formal language ${\sf L}_{\mc{M}}$.
In \cref{ssec:Chomsky} we define how the complexity measure for Ising models is obtained by classifying ${\sf L}_{\mc{M}}$ in the Chomsky hierarchy. 

\subsection{Formal language theory}\label{ssec:formal-lan}
Formal language theory concerns the study of \emph{formal languages}, that is sets of strings over a finite alphabet $\Sigma$. Denote by $\Sigma^*$ the set of all finite strings over $\Sigma$, including the \emph{empty string} $\epsilon$ then a formal language is a subset $\textsf{L}\subseteq \Sigma^*$.

One of the main goals of formal language theory is to classify formal languages w.r.t.\ their complexity, e.g.\ by considering the simplest \emph{grammar} (\cref{def:grammar}) that can produce a given formal language $\textsf{L}$.
Formal grammars consist of a finite set of \emph{production rules} that transform one string into another.
Each grammar further divides the symbols appearing on the right- and left-hand sides of its production rules into a finite set of \emph{terminal} and a finite set of \emph{non-terminal} symbols. 
The language that a given grammar \emph{generates} then consists of all strings that only contain terminal symbols and that can be obtained by applying any finite sequence of production rules to a distinguished \emph{start symbol}.

Consider for instance the alphabet $\{a,b\}$ and the formal language $\textsf{L}_{a^nb^n} = \{a^nb^n \mid n \geq 1 \}$, where $a^n$ denotes the string consisting of $n$-repetitions of the symbol $a$. This language is generated by the grammar with terminal symbols $\{a,b\}$, non-terminal symbols $\{S\}$, start symbol $S$ and production rules 
\begin{equation}
    \begin{split}
        S &\to aSb \\
        S &\to ab.
    \end{split}
\end{equation}

Depending on the strings appearing in the production rules, grammars can be divided into different types. 
The \emph{Chomsky hierarchy} (\cite{Ch63, Ch65}, see \cref{def:Chomsky}) is based on the grammar types \emph{regular}, \emph{context free}, \emph{context sensitive} and \emph{unrestricted}.
These grammar types form an inclusion hierarchy, i.e.\ each regular grammar is also context free, each context free grammar is also context sensitive, etc. 
The Chomsky hierarchy can be extended to languages, by calling a formal language regular if it can be generated by a regular grammar and similar for the remaining levels. 
In this sense the Chomsky hierarchy defines a complexity hierarchy for formal languages.  

For each level of the Chomsky hierarchy there exists a corresponding type of \emph{automaton}. 
Automata are abstract computing machines that consist of
\begin{enumerate}
    \item an (infinite) \emph{input tape} which is divided into cells, each of which can store a single symbol
    \item a finite number of \emph{internal states}
    \item possibly additional \emph{external memory} 
    \item a \emph{machine head} that moves over the input tape 
\end{enumerate}
In each step of the computation, an automaton scans one symbol from its input tape and then, depending on the scanned symbol, the internal state and possibly the content of the external memory, performs an action that can consist of 
moving its head one cell to the left/right, overwriting the current cell on the tape, storing a finite string of symbols in the additional memory and changing its internal state.
How precisely the current symbol, state and memory content give rise to the next action is defined by the \emph{transition relation} of the automaton. 
This transition relation is hardwired, that is each automaton has a pre-defined transition relation.

An automaton is said to \emph{accept} an input string if with the string written to its input tape after a finite number of steps its computation ends in a distinguished accept state.
Each automaton defines a formal language, the language of strings that it accepts. 
In this sense, automata, similar to grammars provide \emph{finite descriptions} of formal languages.

Similar to grammars, depending on the form of their transition relation, automata can be divided into different types. To each type of grammar of the Chomsky hierarchy there exists a corresponding type of automaton such that a formal language 
can be generated by the respective type of grammar if and only if it is accepted by the corresponding type of automaton. 

The automata corresponding to regular grammars are \emph{finite state automata} (FSAs) (\cref{def:FSA}). FSAs have no external memory, their tape is read-only, and they can only move their head to the right. 

The automata corresponding to context free grammars are \emph{push down automata} (PDAs) (\cref{def:PDA}). 
PDAs are FSAs that have an additional, external stack memory, that is a last-in-first-out memory. 

The automata corresponding to unrestricted grammars are \emph{Turing machines} (TMs) (\cref{def:TM}). TMs have an infinite tape to which they can also write. Moreover, they can move their head left or right.

The automata corresponding to context sensitive grammars are \emph{linear bounded automata} (LBAs) (\cref{def:LBA}). LBAs are Turing machines that, during their entire computation, have their input tape restricted to those cells which are originally occupied by the input string.

Constructing an automaton of a certain type that accepts a given language proves that this language belongs to the corresponding level of the Chomsky hierarchy and thereby provides an upper bound for its complexity. 

For instance, the language $\textsf{L}_{a^nb^n}$ can easily be proven to be context free, by constructing a PDA that accepts it. On input $a^nb^n$ the PDA first, for each $a$ that it scans pushes one $a$ to its stack, then, for each $b$ that it scans, it pops one $a$ from its stack. Finally, once it has scanned the entire input, it accepts if and only if its stack is empty, as then the number of $a$s and $b$s is equal.    

Proving lower bounds for the complexity of a language amounts to proving that it cannot be generated/accepted by a grammar/an automaton of a certain type. There exist several theorems and techniques that can be used to achieve this, including closure properties of formal languages and pumping lemmas. Details can be found in \cite{Ko97}.

\subsection{The Ising model and its language}
\label{ssec:Isingmodel}

What is an Ising model? 
That depends on the context: 
the system size may be pre-determined, unspecified, or defined in the thermodynamic limit. 
Additionally, the couplings may be fixed or drawn from a probability distribution, in which case it is usually called a spin glass. In this work, an Ising model is defined as follows. 

\begin{definition}[Ising model]
    \label{def:IsingModel}
    An \emph{Ising model} $\mc{M}$ is a pair $\mc{M}= (N_{\mc{M}}, E_{\mc{M}})$, 
    where  
   \begin{eqnarray}
   &&N_{\mc{M}}\subseteq \mathbb{N} \\
   &&E_\mc{M}   = \{ (E_\mc{M} )_n \mid n \in N_{\mc{M}}\}
   \end{eqnarray}
    and 
       \begin{equation}
    (E_\mc{M} )_n \subseteq \{(i,j) \mid i,j \in \{1,\ldots, n\}, i < j\}
       \end{equation}
    defines an undirected, ordered graph with vertex set $V_n \coloneqq \{1,\ldots, n\}$ that has no isolated vertices. 
    An Ising model $\mc{M}$ defines a \emph{Hamiltonian} $H_{\mc{M}}$: 
    \begin{align}
    \begin{aligned}\label{eqn:IsingHamiltonian}
    H_{\mc{M}} : \bigcup _{n \in N_{\mc{M}}} \{0,1\}^n & \rightarrow \mathbb{Z} \\
    s_1\ldots s_n & \mapsto \sum_{(i,j) \in (E_{\mc{M}})_n} h(s_i,s_j)
    \end{aligned}
    \end{align}
    where $h(s_i,s_j)=-1$ if $s_i=s_j$ and $+1$ else. 
\end{definition}

In words, $N_{\mc{M}}$ specifies the system sizes for which $\mc{M}$ is defined, 
and for each $n \in N_{\mc{M}}$, 
the edge set $(E_{\mc{M}})_n$ describes how the system of $n$ spins interacts. 
Specifically, if there is an edge $(i,j)\in (E_{\mc{M}})_n$, \emph{spins} $i$ and $j$ interact. 
We require that no vertex is isolated (i.e.\ that every vertex is contained in at least one edge) 
because isolated vertices correspond to non-interacting spins, 
which do not contribute to the Hamiltonian.  

A \emph{configuration} of spins assigns a \emph{state} from $\{0,1\}$ to each of the spins. Since we assume that the spins are enumerated, configurations of $n$ spins correspond to strings $s_1\ldots s_n$ of $n$-numbers from $\{0,1\}$.
Note that we take states to be from $\{0,1\}$ instead of the (more) common choice $\{-1,+1\}$. Nevertheless,  
$h(s_i,s_j)$ defines an Ising interaction, i.e.\ only depends on the parity of the two spins $i$ and $j$.    
The \emph{Hamiltonian} $H_{\mc{M}}$ of an Ising model $\mc{M}$ then maps configurations $s_1 \ldots s_n$ to the sum of their local energies.
Note that $H_{\mc{M}}$ is defined for configurations of all system sizes from $N_{\mc{M}}$, i.e.\ could also be interpreted as a family of functions, one for each system size. 

\cref{def:IsingModel} could be generalised to include non-constant couplings or higher order interactions, 
e.g.\ by using hyperedge-labeled hypergraphs (see e.g.\ \cite{De13b}), as done in \cite{St21}.  
Yet, in this work we focus on the constant coupling case in order to classify the complexity of Ising models solely based on their interaction structure. 

Note that $\mc{M}$ is generally defined for an infinite set of system sizes, 
and that it is not defined in the thermodynamic limit ($n\to \infty$). 
Both are crucial for encoding  $H_{\mc{M}} $ as a language, 
as finitely many system sizes would result in a finite language (which is trivially regular), 
and the thermodynamic limit would require infinite strings (precluding the use of formal languages). 

Finally, note that in \cref{def:IsingModel} 
the vertices of the interaction graphs have an order, 
as imposing such an order is necessary when encoding graphs as strings, and thus a family of graphs as a language. 
Ultimately this is due to the fact that symbols in a string have a canonical order, 
while vertices in a graph do not.
Disposing of the order of the vertices 
would require considering equivalence classes of encodings of graphs (where two strings are equivalent if they encode the same graph), 
and measuring the complexity of an Ising model would require a minimisation over all equivalent encodings of that Ising model. 
The latter would involve, in particular, solving the graph isomorphism problem. 
Alternatively, such an order could be disposed of by casting spin models as graph languages (see the \nameref{sec:conclusions}).

In order to define the language of an Ising model, let $u$ denote the unary encoding of integers
\begin{align}\label{eqn:Unary}
    \begin{aligned}
        u : \mathbb{Z} & \rightarrow \{+,-\}^*\\
        u(z) &\coloneqq 
        \begin{cases}
                 \epsilon & \text{if} \ z=0 \\
                 +^z & \text{if} \ z>0 \\
                 -^{-z} & \text{else}
                 \end{cases}
    \end{aligned}
\end{align}
Note that here $+$ and $-$ are just symbols, not mathematical operations.
 
\begin{definition}[Language of an Ising model]
\label{def:HamLanguage}
Let $\mc{M}$ be an Ising model. 
The \emph{language of}  $\mc{M}$, ${\sf L}_{\mc{M}}$, is defined as  
\begin{multline}\label{eqn:HamLanguage}
    {\sf L}_{\mc{M}} \coloneqq \{ s_1\ldots s_n \bullet u( H_{\mc{M}}(s_1 \ldots s_n) )\mid \\
    n \in N_{\mc{M}}, s_i \in \{0,1\} \}  
\end{multline}
\end{definition}

In words, ${\sf L}_{\mc{M}}$ encodes the function graph of $H_{\mc{M}}$. 
Explicitly,
we use the symbol $\bullet$ as a separator between spin configurations and energies.  
Let $\sigma \in \{+,-\}$.  
A string $s_1\ldots s_n\bullet \sigma^k$
is contained in ${\sf L}_{\mc{M}}$ if 
$s_1\ldots s_n$ is a spin configuration from the domain of $H_{\mc{M}}$
and $\sigma^k$ equals the unary encoding of $H_{\mc{M}}(s_1\ldots s_n)$. 

Note that the energy is encoded in unary,   
as this leads to a more fine-grained classification in the Chomsky hierarchy. 
Specifically,  
encoding the energy in binary would render the addition of individual energy contributions context sensitive and not context free, 
and we would lose one entire level of our complexity hierarchy (\cref{fig:mainresult}).  
The increase in complexity caused by a binary encoding has also been observed from a different angle in Ref.\ \cite{St21}. 

\subsection{The complexity measure provided by the Chomsky hierarchy}
\label{ssec:Chomsky}

We now classify ${\sf L}_{\mc{M}}$ in the Chomsky hierarchy. 
To that end, we define two additional levels of the Chomsky hierarchy which are obtained by posing restrictions on the automata that accept context free and context sensitive languages, namely constructive context free and constructive context sensitive.
We will conjecture that these two new levels are identical with context free and context sensitive, respectively.

\begin{definition}[Constructive automaton]\label{def:constructive}
Let $\mc{M}$ be an Ising model and ${\sf L}_{\mc{M}}$ be its language. 

\begin{enumerate}[label=(\roman*), ref=(\roman*)]
\item 
A PDA $P$ that decides ${\sf L}_{\mc{M}}$ is called \emph{constructive} if 
for any $n \in N_{\mc{M}}$,  there exists a unique partition of edges
$(E_{\mc{M}})_n = \bigcup_{m=1}^{r_n}I_m $,
such that on well-formed inputs $s_1 \ldots s_n \bullet \sigma^k$, $P$ operates as follows: 
\begin{enumerate}[label=(\alph*), ref=(\alph*)]
\item \label{def:constructive:i}
First, $P$ accumulates $u (H_{\mc{M}}(s_1\ldots s_n))$ on its stack. 
$P$ iterates over $m=1, \ldots, r_n$, and in each step of the iteration
it stores the states of the spins that are contained in edges from $I_m$, i.e.\ $V_n\vert_{I_m}$
in its internal states. 
$P$ then adds the unary encoding of the energy contribution of these spins 
\begin{align}
   H_{\mc{M}}\vert_{I_m} \coloneqq  \sum _{(i,j) \in I_m}h(s_i,s_j)
\end{align} 
to its stack.
\item \label{def:constructive:ii}
Second, $P$ compares its stack content, $u (H_{\mc{M}}(s_1\ldots s_n))$ to the input energy $\sigma^k$ and accepts if and only if they are equal.
\end{enumerate}

\item 
A LBA  $M$ that decides ${\sf L}_{\mc{M}}$ is called \emph{constructive} if it uses a designated energy tape $T_e$ to accumulate the energy $H_{\mc{M}}(s_1\ldots s_n)$ 
in binary, 
compares the content of $T_e$ to the input energy and accepts if and only if the two values coincide.
\end{enumerate}
\end{definition}

For the constructive PDA, note that there exists an upper bound for the size of $I_m$, 
as by definition a constructive PDA must be able to store all spin states contained in $V_n\vert_{I_m}$ in its states. 

Further, note that the definition of constructive LBA uses the fact that multi-tape LBAs are equivalent to single-tape LBAs in terms of the set of languages they accept. 

In words, a constructive automaton works the way one naively expects: 
it adds up local energy contributions in a pre-determined way, 
and then compares the result to the input energy. 
In particular, constructive automata compute $H_{\mc{M}}$ as a function.
Working with constructive automaton thus ensures that we consider the function problem of computing $H_{\mc{M}}$, although for technical reasons, we formulate it as a decision problem (deciding the function graph of $H_{\mc{M}}$) so that we can work with the Chomsky hierarchy. 
We  conjecture that the constructive condition on the automata is not necessary: 

\begin{conjecture}\label{conj}
For every context free ${\sf L}_{\mc{M}}$  there exists a constructive PDA that decides it. 
For every context sensitive ${\sf L}_{\mc{M}}$  there exists a constructive LBA that decides it. 
\end{conjecture}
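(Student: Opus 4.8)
The plan is to reduce the conjecture to the characterisation already supplied by \cref{thm:main}. By part (ii) of that theorem, ${\sf L}_{\mc{M}}$ is constructive context free precisely when $\mc{M}$ is linear and ${\sf E}_{\mc{M}}$ is regular, and by part (iii) it is constructive context sensitive precisely when ${\sf E}_{\mc{M}}$ is context sensitive. Since a constructive PDA (LBA) is in particular a PDA (LBA), the conjecture is equivalent to two implications: (I) if ${\sf L}_{\mc{M}}$ is context free then $\mc{M}$ is linear and ${\sf E}_{\mc{M}}$ is regular; and (II) if ${\sf L}_{\mc{M}}$ is context sensitive then ${\sf E}_{\mc{M}}$ is context sensitive. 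In words, one must rule out that a non-constructive automaton recognises ${\sf L}_{\mc{M}}$ for an $\mc{M}$ too complex to be handled constructively, i.e.\ that reordering edge contributions or exploiting cancellations in $H_{\mc{M}}$ buys power. (An alternative, more hands-on route would normalise an arbitrary automaton for ${\sf L}_{\mc{M}}$ into constructive form directly; I expect the same difficulty to surface there.)

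The linearity half of (I) is the easy part. Since $H_{\mc{M}}(0^n)=-|(E_{\mc{M}})_n|$, the language ${\sf L}_{\mc{M}}\cap 0^*\bullet\{+,-\}^*$ equals $\{\,0^n\bullet{-}^{|(E_{\mc{M}})_n|}\mid n\in N_{\mc{M}}\,\}$, and it is context free because context free languages are closed under intersection with regular sets. By Parikh's theorem its Parikh image is semilinear; as the $\bullet$-count is constantly $1$, its projection onto the $0$-count and the ${-}$-count is the semilinear graph of the map $n\mapsto|(E_{\mc{M}})_n|$, and such a graph is, on each of its finitely many linear pieces, the graph of an affine function. Hence $|(E_{\mc{M}})_n|$ is bounded above by an affine function on $N_{\mc{M}}$, i.e.\ $\mc{M}$ is linear.

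The recovery of ${\sf E}_{\mc{M}}$ from ${\sf L}_{\mc{M}}$ is what drives (II). Write $c_i$ (resp.\ $c_{ij}$) for the configuration obtained from $0^n$ by flipping spin $i$ (resp.\ spins $i$ and $j$); a short calculation gives $H_{\mc{M}}(c_i)-H_{\mc{M}}(0^n)=2\deg i$, while the mixed second difference $H_{\mc{M}}(c_i)+H_{\mc{M}}(c_j)-H_{\mc{M}}(c_{ij})-H_{\mc{M}}(0^n)$ equals $4$ if $(i,j)\in(E_{\mc{M}})_n$ and $0$ otherwise. So a non-deterministic linearly space-bounded machine can decide ${\sf E}_{\mc{M}}$ by guessing the four energies involved, verifying each against ${\sf L}_{\mc{M}}$ via a query of length $O(n)$, and checking the arithmetic relation; closure of $\mathrm{NSPACE}(n)$ (the context sensitive languages) under these operations then yields (II), once the encoding of ${\sf E}_{\mc{M}}$ is fixed so that its inputs have length linear in $n$.

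The genuinely hard step --- and the reason the statement is only a conjecture --- is the second half of (I): ${\sf L}_{\mc{M}}$ context free $\Rightarrow$ ${\sf E}_{\mc{M}}$ regular. The probing reduction above does not help here, since preimages of a context free language under a simple reduction need not be regular, or even context free. The natural attack is an Ogden-style pumping argument applied to ${\sf L}_{\mc{M}}$ itself: take a long, near-all-zero witness $s_1\ldots s_n\bullet u(H_{\mc{M}}(s))$, mark positions inside the spin prefix, and follow how the two pumped blocks perturb both $n$ and the edge set, aiming to violate the rigid spin-to-energy relation of ${\sf L}_{\mc{M}}$ when ${\sf E}_{\mc{M}}$ is non-regular. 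The obstacle is that $H_{\mc{M}}$ is a lossy function of the configuration --- many configurations share an energy --- and that pumping the spin word changes $n$, hence the ambient graph $(E_{\mc{M}})_n$, rather than editing a fixed graph; the test configurations (and possibly a regular set to intersect with first) have to be engineered so that the induced energy change is legible. Making this robust for every context free ${\sf L}_{\mc{M}}$, not just the concrete families of \cref{sec:examples}, is the crux.
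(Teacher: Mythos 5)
This statement is \cref{conj}, which the paper explicitly leaves open and never proves, so there is no proof of record to compare against; the question is only whether your attempt closes the conjecture, and it does not. Your framing is the right one: by \cref{thm:main} \ref{thm:main:cf1} and \ref{thm:main:ccs} the conjecture is equivalent to your implications (I) and (II), and your proof of the linearity half of (I) is sound --- it is essentially the paper's own proof of step (a) in \cref{ssec:proofmain2} (intersect with $0^*\bullet-^*$, apply Parikh's theorem, bound $e_n/n$ on each linear piece). You also correctly identify the regularity half of (I) as the crux and honestly do not claim to prove it, so that part of the conjecture remains open in your write-up by your own admission.

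The genuine gap is in your argument for (II), which you present as essentially done but which fails for a concrete reason. The input to ${\sf E}_{\mc{M}}$ has length $n$, but because the energy in ${\sf L}_{\mc{M}}$ is encoded in unary, the four query strings $s\bullet u(H_{\mc{M}}(s))$ have length up to $n+1+\binom{n}{2}=\Theta(n^2)$. A general LBA for ${\sf L}_{\mc{M}}$ is entitled to use space linear in \emph{its} input, i.e.\ $\Theta(n^2)$ work space on these queries, so a machine deciding ${\sf E}_{\mc{M}}$ in space $O(n)$ cannot simulate it; $\mathrm{NSPACE}(n)$ is not closed under reductions that blow the input up quadratically (indeed $\mathrm{NSPACE}(n)\subsetneq\mathrm{NSPACE}(n^2)$ by the space hierarchy theorem). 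This is exactly the point where the paper's proof of \cref{thm:main} \ref{thm:main:ccs} invokes constructivity: the constructive LBA accumulates $H_{\mc{M}}(s)$ in \emph{binary} on a designated tape before ever touching the unary input energy, so the simulating machine $M_C$ only needs $O(n)$ space for the spin configurations plus $O(\log n)$ for the energies. Removing the constructivity hypothesis is precisely what your (II) would have to accomplish, and the ``guess and verify the four energies'' step does not do it. Your parenthetical about fixing the encoding of ${\sf E}_{\mc{M}}$ does not help either, since the mismatch is in the length of the ${\sf L}_{\mc{M}}$-queries, not of the ${\sf E}_{\mc{M}}$-inputs, and changing the definition of ${\sf E}_{\mc{M}}$ would break the appeal to \cref{thm:main} \ref{thm:main:ccs}. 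So both halves of the conjecture remain open after your attempt.
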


In this work, we do not assume that this conjecture is true, i.e.\ we explicitly state whenever we require that an automaton is constructive. 

Constructive PDA and constructive LBA define two new complexity levels for ${\sf L}_{\mc{M}}$. 
If ${\sf L}_{\mc{M}}$ is accepted by a constructive PDA, 
we say that ${\sf L}_{\mc{M}}$ is constructive context free; 
if  ${\sf L}_{\mc{M}}$ is accepted by a constructive LBA, 
we say that ${\sf L}_{\mc{M}}$ is constructive context sensitive (cf.\ \cref{fig:mainresult}).
Considering only languages of the type ${\sf L}_{\mc{M}}$, 
constructive context free is a subset of context free, 
and constructive context sensitive is a subset of context sensitive. 
We now show that supplementing the Chomsky hierarchy with these two complexity levels still forms a hierarchy, i.e.\ that regular is a subset of constructive context free and context free is a subset of constructive context sensitive.   

\begin{proposition}[Refined Chomsky hierarchy]
Let $\mc{M}$ be an Ising model and ${\sf L}_{\mc{M}}$ be its language. 
\begin{enumerate}[label=(\roman*)]
 \item\label{it:FineGrain1} If ${\sf L}_{\mc{M}}$ is regular then it is constructive context free.
 \item\label{it:FineGrain2} If ${\sf L}_{\mc{M}}$ is context free then it is constructive context sensitive.
\end{enumerate}
\end{proposition}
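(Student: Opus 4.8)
The plan is to prove each inclusion by exhibiting, from a given automaton, a constructive automaton of the next level up that decides the same language ${\sf L}_{\mc{M}}$. The key point is that the constructive conditions in \cref{def:constructive} are very mild: a constructive PDA need only accumulate the correct energy on its stack via some fixed partition of the edges, while a constructive LBA need only keep a designated energy tape. So in both cases the strategy is to \emph{ignore} the input automaton's internal workings and instead build a fresh automaton that directly computes $H_{\mc{M}}$ in the naive way, using the hypothesis (regularity, resp.\ context-freeness of ${\sf L}_{\mc{M}}$) only to conclude that such a naive computation is feasible within the resource bounds of a PDA, resp.\ an LBA.

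For part \ref{it:FineGrain1}: if ${\sf L}_{\mc{M}}$ is regular, I would first argue that $\mc{M}$ must be \emph{finite} in an appropriate sense --- the number of edges $|(E_{\mc{M}})_n|$, and hence $|H_{\mc{M}}(s_1\ldots s_n)|$, must be bounded uniformly in $n$, since otherwise the energy suffix $\sigma^k$ grows without bound and a pumping-lemma argument on the regular language ${\sf L}_{\mc{M}}$ yields a contradiction (one can pump the spin part while the required energy is pinned, or vice versa). Given such a uniform bound $K$ on $|H_{\mc{M}}|$, the whole energy fits in finitely many states, so I would take the trivial partition $(E_{\mc{M}})_n = I_1$ (all edges in one block) --- this is legal precisely because $V_n|_{I_1}$ has bounded size is \emph{not} automatic here, so more care is needed: the spins themselves are unbounded in number. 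Instead I would partition edges so that each block $I_m$ touches only $O(1)$ vertices; finiteness of the model should give that $(E_{\mc{M}})_n$ is itself a fixed finite edge set independent of $n$ (up to the base case), so the partition is into individual edges, each touching two vertices, and the constructive PDA reads the string, records the $O(1)$ relevant spin values in its states, pushes each local contribution $h(s_i,s_j)\in\{+1,-1\}$ onto the stack, then matches against $\sigma^k$. The stack is used only to hold a bounded-length string, so a (constructive) PDA certainly suffices, and regularity guarantees this construction accepts exactly ${\sf L}_{\mc{M}}$.

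For part \ref{it:FineGrain2}: if ${\sf L}_{\mc{M}}$ is context free, I would build a constructive LBA directly. The LBA has linear space, so on input $s_1\ldots s_n\bullet\sigma^k$ it can afford to store, on a designated energy tape $T_e$, a binary counter of length $O(\log(|(E_{\mc{M}})_n|))$. The construction: for each edge $(i,j)$ of $(E_{\mc{M}})_n$ --- enumerated by scanning the work tape --- the machine looks up $s_i$ and $s_j$, computes $h(s_i,s_j)$, and increments or decrements the binary value on $T_e$. At the end it compares $T_e$ against the input $\sigma^k$ and accepts iff they agree, which is exactly the constructive-LBA protocol of \cref{def:constructive}. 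The only thing to check is that everything fits in linear space and that the edge enumeration is computable within an LBA; here one needs that $|(E_{\mc{M}})_n| = 2^{O(n)}$ (so the counter is $O(n)$ bits) --- but this follows because ${\sf L}_{\mc{M}}$ being context free forces the energy suffix to have length polynomial, in fact linear up to constants, in $n$ by the context-free pumping lemma (an exponentially long suffix cannot be pumped in balance with a linear spin prefix). So the counter is $O(\log n) = O(n)$ bits, well within linear space, and the edge set can be generated on the work tape.

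The main obstacle I anticipate is the bookkeeping in part \ref{it:FineGrain1}: one must extract, purely from regularity of ${\sf L}_{\mc{M}}$, enough structure on $\mc{M}$ (boundedness of energies, and ultimately that the interaction graph stabilises) to legitimately build a constructive PDA whose partition blocks each touch boundedly many vertices. This essentially duplicates the ``regular $\Rightarrow$ finite'' direction of the main theorem, so the cleanest route may be to invoke that direction (or its pumping-lemma core) and then observe that a finite Ising model is trivially handled by a constructive PDA with the edge-by-edge partition. Part \ref{it:FineGrain2} is more routine, the only subtlety being the pumping argument bounding the suffix length so that the binary counter stays within linear space.
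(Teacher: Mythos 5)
Your part \ref{it:FineGrain1} is essentially the paper's argument: reduce to the statement that a regular ${\sf L}_{\mc{M}}$ forces $\mc{M}$ to be finite (the paper invokes \cref{thm:main} \ref{thm:main:regular}, whose proof is exactly the pumping argument you sketch), and then note that for a finite model everything --- all spins, all edges, the whole energy --- is uniformly bounded, so a PDA can store the spins in its states and push the energy onto its stack. Your hesitation about the partition is unnecessary: since $N_{\mc{M}}$ is finite there is a maximum system size, so $V_n\vert_{I_1}$ is bounded even for the trivial one-block partition, which is what the paper uses; your edge-by-edge partition works equally well. The slight misstatement that $(E_{\mc{M}})_n$ is ``independent of $n$'' is harmless (there are finitely many $n$, each with its own edge set).

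Part \ref{it:FineGrain2} has a genuine gap. Your LBA is supposed to iterate ``for each edge $(i,j)$ of $(E_{\mc{M}})_n$ --- enumerated by scanning the work tape,'' but you never say how the LBA decides which pairs $(i,j)$ are edges. The edge set is an arbitrary family of graphs; an LBA has no a priori access to it, and the ability of an LBA to decide ${\sf E}_{\mc{M}}$ is precisely the nontrivial content of \cref{thm:main} \ref{thm:main:ccs}, which cannot be assumed here. Deriving a decision procedure for the edges from the hypothesis that ${\sf L}_{\mc{M}}$ is context free is possible in principle (decide ${\sf L}_{\mc{M}}$ in linear space, recover $H_{\mc{M}}$ by querying candidate energies, then compute the correlation function $C_{\mc{M}}(n,i,j)$ of \cref{eqn:corr}; this needs linearity of $\mc{M}$, via Parikh's theorem, so that the unary energies fit in the space bound), but none of this is in your write-up, and it is considerably heavier than what is needed. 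The paper instead simulates the given context-free structure directly: it passes to a Greibach-normal-form grammar, obtains a PDA without $\epsilon$-transitions whose stack usage is linear in the input length, simulates that PDA on an LBA with one extra tape, and then observes that any such machine must have the energy $H_{\mc{M}}(s_1\ldots s_n)$ stored by the time its head reaches $\bullet$ (else it could not test equality with $\sigma^k$), so it can w.l.o.g.\ be taken to record that energy in binary on a designated tape, i.e.\ to be constructive. Your ``build a fresh naive machine'' strategy discards the only computational resource the hypothesis hands you --- the PDA itself --- and that is exactly where the argument breaks.
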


\begin{proof}
Starting with \ref{it:FineGrain1}, if ${\sf L}_{\mc{M}}$ is regular then by \cref{thm:main} \ref{thm:main:regular}, $\mc{M}$ is finite, i.e.\ there exists a maximum system size. Hence, we can build a PDA $P$ that on input $s_1 \ldots s_n \bullet \sigma^k$
reads and stores all spin symbols. As there are   finitely many, this can be done with a finite number of states. Next $P$ adds the entire energy $H_{\mc{M}}(s_1 \ldots s_n)$ to its stack. This can be hardwired in the transition rules. Finally, $P$ compares its stack to 
the input energy $\sigma^k$ and accepts if and only if they are equal. Note that $P$ is trivially constructive. 
For all system sizes, the partition of edges consists of one element only, namely the entire edge set.  

Next we prove \ref{it:FineGrain2}. 
As ${\sf L}_{\mc{M}}$ is context free, it has a context free grammar in Greibach normal form \cite[Lecture 21]{Ko97}. 
From this grammar, 
one can build a PDA  $P$ without $\epsilon$-transitions that decides this language \cite[Lecture 24]{Ko97}. 
Each transition rule of $P$ pushes at most $k$ symbols to the stack. 
As there are no $\epsilon$-transitions, $P$ uses an amount of stack memory that is linear in the length of the input.
$P$ can hence be simulated by a LBA $M$ simply by using an additional tape to simulate the linear bounded stack of $P$. 
We now show that this LBA can be assumed to be constructive.  
First note that when processing any well-formed input $s_1 \ldots s_n \bullet \sigma^k$, 
once the head of $P$ reaches $\bullet$,
$P$ has stored the energy $H_{\mc{M}}(s_1 \ldots s_n)$ either on its stack, in its states or by using a combination of both, as otherwise $P$ could not decide if $u(H_{\mc{M}}(s_1 \ldots s_n))=\sigma^k$. 
It follows that, when $M$ simulates $P$,  $M$ stores $H_{\mc{M}}(s_1 \ldots s_n)$ at some point during the computation, and we can w.l.o.g.\ assume that $M$ stores this energy in binary, which makes it constructive.  
\end{proof}

Finally, we can define the complexity measure for Ising models: 
The complexity of an Ising model is obtained by classifying its language in the (refined) Chomsky hierarchy. 
The latter induces a complexity hierarchy of Ising models themselves.

\begin{definition}[Complexity measure]
    \label{def:Complexity}
Let $\mc{M}$ be an Ising model. We say that $\mc{M}$ has complexity $X$ if its language ${\sf L}_{\mc{M}}$ belongs to level $X$ of the (refined) Chomsky hierarchy.
Here $X$ can be any of the following alternatives: 
regular, 
constructive context free, 
context free, 
constructive context sensitive, 
context sensitive, 
decidable. 
\end{definition}

\section{Full Classification of Ising models}
\label{sec:mainresult}

In this section we state and discuss our main result: a full classification of the complexity of   Ising models based on properties of their interaction graphs (\cref{thm:main}).
First we define the relevant properties of interaction graphs, more precisely of families of interaction graphs (\cref{ssec:PropIsing}).  
Then we state \cref{thm:main} (\cref{ssec:main}), and provide a proof in \cref{sec:proofmain}.

\subsection{Properties of Ising models}
\label{ssec:PropIsing}

Let us now introduce several properties of families of interaction graphs. 
These properties can be divided into two classes. 
The first class captures the complexity of the family of interaction graphs.  
The second class captures how certain properties of the individual graphs contained therein scale with the system size.

In order to quantify the complexity of the family of interaction graphs of an Ising model $\mc{M}$ 
we define the edge language of $\mc{M}$, ${\sf E}_{\mc{M}}$, as a formal language that encodes the entire family of interaction graphs of $\mc{M}$. 
Consequently, classifying ${\sf E}_{\mc{M}}$ in the Chomsky hierarchy measures the complexity of the family of interaction graphs of $\mc{M}$.

\begin{definition}[Edge language]
\label{def:EdgeLanguage}
Let $\mc{M}$ be an Ising model. The \emph{edge language} of $\mc{M}$, ${\sf E}_{\mc{M}}$,  is defined as 
\begin{align}\label{eqn:EdgeLanguage}
    {\sf E}_{\mc{M}} \coloneqq \{ 0^{i-1}10^{j-i-1}10^{n-j} 
    \mid  n \in N_{\mc{M}}, \: (i,j) \in (E_{\mc{M}})_n \} 
\end{align}
\end{definition}

In words, 
${\sf E}_{\mc{M}}$ directly encodes the interaction graphs of $\mc{M}$, as $w=w_1\ldots w_n \in {\sf E}_{\mc{M}}$ if and only if $n \in N_{\mc{M}}$ and $w$ is a row of the incidence matrix of the graph defined by $(E_{\mc{M}})_n$.
So each string $w$ in the edge language specifies  one edge of one interaction graph of $\mc{M}$, 
as well as the system size of that interaction graph (encoded in the length of $w$).

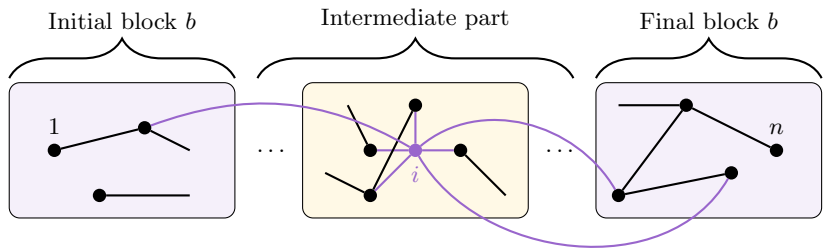
\begin{figure*}[t]\centering
    \begin{center}
        \begin{tikzpicture}[->,>=stealth',auto, scale=0.6,
            thick, main node/.style={circle,draw, fill = black, inner sep = 1.5}]{
        \node[main node] (1) at (0,0) {};
        \node[main node] (2) at (0,1) {};
        \node[main node, amethyst, label=below:{\color{amethyst} $i$}] (3) at (1,1) {};
        \node[main node] (4) at (2,1) {};
        \node[main node] (5) at (1,2) {};
        \coordinate (6) at (3,0);
        \coordinate (7) at (-1,0.5);
        \coordinate (8) at (-0.5,2);

        \draw[-, amethyst] (3) -- (2) ; 
        \draw[-, amethyst] (3) -- (1) ; 
        \draw[-] (6) -- (4); 
        \draw[-, amethyst] (3) -- (4); 
        \draw[-] (1) -- (5);
        \draw[-, amethyst] (3) -- (5);
        \draw[-] (1) -- (7);
        \draw[-] (2) -- (8);

	\begin{scope}[on background layer]
        \draw[rounded corners, fill=amber!10] (-1.5,-0.5) rectangle ++(5,3);

        \draw[rounded corners, fill=amethyst!10] (-8,-0.5) rectangle ++(5,3);
        \draw[rounded corners, fill=amethyst!10] (5,-0.5) rectangle ++(5,3);       
        \end{scope}

	\node (ldots1) at (-2.2,1) {$\ldots$};
	\node (ldots2) at (4.2,1) {$\ldots$};

        \node[main node, label=above:{$1$}] (9) at (-7,1) {};
        \node[main node] (10) at (-5,1.5) {};
        \node[main node] (11) at (-6,0) {};
        \coordinate (12) at (-4,0);
        \coordinate (13) at (-4,1);    

        \draw[-] (11) -- (12);
        \draw[-] (9) -- (10);
        \draw[-] (10) -- (13);

        \draw[-, amethyst] (3) to [out=150, in=20] (10);

        \draw[-, decorate,decoration={brace,amplitude=15pt},xshift=0pt,yshift=0pt]
              (-2.5,2.6) -- (4.5,2.6) node [black,midway, yshift=15pt] {Intermediate part};

        \draw[-, decorate,decoration={brace,amplitude=15pt},xshift=0pt,yshift=0pt]
              (-8,2.6) -- (-3,2.6) node [black,midway, yshift=15pt] {Initial block $b$};

        \draw[-, decorate,decoration={brace,amplitude=15pt},xshift=0pt,yshift=0pt]
              (5,2.6) -- (10,2.6) node [black,midway, yshift=15pt] {Final block $b$};

        \node[main node] (14) at (5.5,0) {};
        \node[main node] (15) at (7,2) {};
        \node[main node] (16) at (8,0.5) {};
        \node[main node, label=above:{$n$}] (17) at (9,1) {};
        \coordinate (18) at (5.5,2) {};

        \draw[-] (18) -- (15); 
        \draw[-] (14) -- (16);
        \draw[-] (14) -- (15);
        \draw[-] (15) -- (17);
        \draw[-, amethyst] (3) to [out=40, in=120] (14);
        \draw[-, amethyst] (3) to [out=300, in=240] (16);

        }
        \end{tikzpicture}
        \end{center}
\caption{Interaction graph of a limited Ising model (\cref{def:PropIsing} \ref{def:limited}). 
Given any vertex $i$ not within the first or last $b$ vertices, 
all incident edges are either connected to a vertex $j$ with $\vert j-i\vert \leq b$, i.e.\ to a vertex within the same block  (pale yellow block), or to one of the first or last $b$ vertices (pale purple blocks).}
\label{fig:limited}
\end{figure*}

The following proposition justifies the classification of the complexity of Ising models based on their edge language, as it states that not only 
${\sf L}_{\mc{M}}$ but also ${\sf E}_{\mc{M}}$ characterise $\mc{M}$ uniquely.

\begin{proposition}[Uniqueness of the edge language]
\label{thm:UniqueLanguage}
Let $\mc{M}$ and $\mc{M}^{\prime}$ be two Ising models. 
The following are equivalent: 
\begin{enumerate}[label=(\roman*)]
    \item\label{en:MEq} $\mc{M} = \mc{M}^{\prime}$
    \item\label{en:LEq} ${\sf L}_{\mc{M}} = {\sf L}_{\mc{M}^{\prime}}$
    \item\label{en:EEq} ${\sf E}_{\mc{M}} = {\sf E}_{\mc{M}^{\prime}}$
\end{enumerate}
\end{proposition}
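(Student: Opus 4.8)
The plan is to prove the cycle of implications \ref{en:MEq} $\Rightarrow$ \ref{en:LEq} $\Rightarrow$ \ref{en:EEq} $\Rightarrow$ \ref{en:MEq}. The first implication is immediate: if $\mc{M} = \mc{M}'$ then $N_{\mc{M}} = N_{\mc{M}'}$ and $(E_{\mc{M}})_n = (E_{\mc{M}'})_n$ for all $n$, so $H_{\mc{M}} = H_{\mc{M}'}$ as functions (same domain, same value by \cref{eqn:IsingHamiltonian}), and hence ${\sf L}_{\mc{M}} = {\sf L}_{\mc{M}'}$ directly from \cref{def:HamLanguage}. The implication \ref{en:EEq} $\Rightarrow$ \ref{en:MEq} is also essentially a definition-chase: by the ``in words'' reading of \cref{def:EdgeLanguage}, a string $w \in {\sf E}_{\mc{M}}$ of length $n$ has exactly two $1$s, at positions $i$ and $j$, and its membership is equivalent to $n \in N_{\mc{M}}$ and $(i,j) \in (E_{\mc{M}})_n$. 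So from ${\sf E}_{\mc{M}}$ one recovers $N_{\mc{M}}$ as the set of lengths of strings in ${\sf E}_{\mc{M}}$ (here one uses the no-isolated-vertices hypothesis to ensure that every $n \in N_{\mc{M}}$ contributes at least one string, so no system size is lost), and one recovers each $(E_{\mc{M}})_n$ by reading off the two marked positions. Thus ${\sf E}_{\mc{M}} = {\sf E}_{\mc{M}'}$ forces $N_{\mc{M}} = N_{\mc{M}'}$ and $(E_{\mc{M}})_n = (E_{\mc{M}'})_n$ for every $n$, i.e.\ $\mc{M} = \mc{M}'$.

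The interesting step is \ref{en:LEq} $\Rightarrow$ \ref{en:EEq}: recovering the interaction graphs from the function graph of the Hamiltonian. I would argue that ${\sf E}_{\mc{M}}$ is determined by ${\sf L}_{\mc{M}}$ by giving an explicit recipe. First, $N_{\mc{M}}$ is recovered as the set of all $n$ such that some string in ${\sf L}_{\mc{M}}$ has exactly $n$ symbols before the separator $\bullet$ — equivalently, the length of the spin-configuration prefix. Next, fix $n \in N_{\mc{M}}$; I claim $(E_{\mc{M}})_n$ is determined. The key observation is that the single-bond energy function $h$ satisfies: flipping spin $i$ in a configuration $s$ changes $H_{\mc{M}}(s)$ by $-2\sum_{j : (i,j) \in (E_{\mc{M}})_n \text{ or } (j,i) \in (E_{\mc{M}})_n} h(s_i, s_j)$ (the contributions of all other bonds are unchanged). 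More robustly, for a pair $i < j$ one can detect whether $(i,j) \in (E_{\mc{M}})_n$ by a second difference: compare $H_{\mc{M}}$ on the four configurations obtained from a fixed $s$ by independently flipping $s_i$ and $s_j$. Since $h(a,b) + h(\bar a, \bar b) - h(\bar a, b) - h(a, \bar b) = (-1) + (-1) - (1) - (1) = -4 \neq 0$ when the bond is present, and the analogous quantity is $0$ for every bond not involving both $i$ and $j$ (such a bond's four values collapse pairwise), the ``discrete mixed second difference'' of $H_{\mc{M}}$ in coordinates $i,j$ is $-4$ if $(i,j)$ is an edge and $0$ otherwise. Reading the relevant four energies off ${\sf L}_{\mc{M}}$ (via the unary-encoded energy after $\bullet$) thus recovers the full edge set $(E_{\mc{M}})_n$, hence ${\sf E}_{\mc{M}}$.

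The main obstacle — though a minor one — is being careful that all the data needed for these difference computations is actually present in ${\sf L}_{\mc{M}}$ and that the recipe is unambiguous: one must note that for a given $n \in N_{\mc{M}}$, ${\sf L}_{\mc{M}}$ contains the string for \emph{every} configuration $s \in \{0,1\}^n$ (so the four needed configurations are available), and that the energy is read unambiguously since $H_{\mc{M}}$ is single-valued and $u$ is injective. One should also double-check the degenerate interactions with the empty string $u(0) = \epsilon$: a configuration of energy zero contributes the string $s_1 \ldots s_n \bullet$, which is still fine, and the arithmetic of the second difference goes through over $\mathbb{Z}$ regardless. A cleaner exposition might phrase \ref{en:LEq} $\Rightarrow$ \ref{en:EEq} as: ${\sf L}_{\mc{M}}$ determines $H_{\mc{M}}$ (domain and all values), and $H_{\mc{M}}$ determines $(E_{\mc{M}})_n$ for each $n$ via the second-difference formula, hence determines ${\sf E}_{\mc{M}}$. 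I would also remark that one could instead close the cycle as \ref{en:EEq} $\Rightarrow$ \ref{en:MEq} $\Rightarrow$ \ref{en:LEq} $\Rightarrow$ \ref{en:EEq}, whichever ordering makes the writeup shortest; the substantive content is entirely in the reconstruction of the edges from the Hamiltonian's function graph.
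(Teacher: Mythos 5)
Your proposal is correct and takes essentially the same route as the paper: your ``discrete mixed second difference'' in coordinates $i,j$ is exactly the paper's correlation function $C_{\mc{M}}(n,i,j)$ of \cref{eqn:corr} specialised to the base configuration $0^n$, and the rest is the same definition-chasing. (One cosmetic remark: for a general base configuration $s$ the second difference is $-4$ when $s_i=s_j$ and $+4$ when $s_i\neq s_j$, so it is ``nonzero iff the edge is present'' rather than uniformly $-4$; taking $s=0^n$ as the paper does removes even this ambiguity.)
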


\begin{proof}
The two implications \ref{en:MEq} $\implies$ \ref{en:LEq} and \ref{en:EEq} $\implies$ \ref{en:MEq} are obvious. 
We thus need to show \ref{en:LEq} $\implies$ \ref{en:EEq}. 
If ${\sf L}_{\mc{M}}= {\sf L}_{\mc{M}^{\prime}}$ then clearly $N_{\mc{M}}= N_{\mc{M}^{\prime}}$. Now take any $n \in N_{\mc{M}}$. 
To show that  $(E_{\mc{M}})_n = (E_{\mc{M}^{\prime}})_n$ consider the  function
\begin{align}\label{eqn:corr}
    \begin{aligned}
    C_{\mc{M}}(n,i,j) \coloneqq -\frac{1}{4} \bigl[&H_{\mc{M}}(0^n) 
  + H_{\mc{M}}(0^{i-1}10^{j-i-1}10^{n-j})\\
    & - H_{\mc{M}}(0^{i-1}10^{n-i}) 
     - H_{\mc{M}}(0^{j-1}10^{n-j})\bigr ]
    \end{aligned}
\end{align}
Using \cref{eqn:IsingHamiltonian} one readily concludes that 
\begin{align}
    C_{\mc{M}}(n,i,j) = \begin{cases}
        1 & \text{if} \ (i,j) \in (E_{\mc{M}})_n \\
        0 & \text{else}
    \end{cases}
\end{align}
Thus 
\begin{equation}
(E_{\mc{M}})_n = \{ (i,j) \mid C_{\mc{M}}(n,i,j) = 1 \}
\end{equation}
But as ${\sf L}_{\mc{M}}= {\sf L}_{\mc{M}^{\prime}}$ is equivalent to $H_{\mc{M}}= H_{\mc{M}^{\prime}}$, it is also the case that $C_{\mc{M}}(n,i,j) = C_{\mc{M}^{\prime}}(n,i,j)$,  and hence it follows that $(E_{\mc{M}})_n = (E_{\mc{M}^{\prime}})_n$.
\end{proof}

Next we consider how certain properties of the interaction graphs of an Ising model scale, i.e.\ how they change when changing the system size.
\begin{definition}[Finite, limited and linear Ising model]
    \label{def:PropIsing}
    We call an Ising model $\mc{M}$
    \begin{enumerate}[label=(\roman*)]
        \item \label{def:finite}
        \emph{finite} if $N_{\mc{M}}$ is finite.
        \item \label{def:limited}
        \emph{limited} if there exists a natural number $b$ such that for all $n \in N_{\mc{M}}$ and any $(i,j) \in (E_{\mc{M}})_n$, 
        if $j-i > b$ then either $i \leq b$ or $n-j \leq b$. 
        \item \label{def:linear}
        \emph{linear} if there exists a natural number $k$ such that for all $n\in N_{\mc{M}}$, it is the case that $\vert (E_{\mc{M}})_n \vert \leq k   n$.
    \end{enumerate}
\end{definition}

For every Ising model the number of edges scales at most quadratically with the system size (because the complete graph has $\binom{n}{2}$ edges). 
The properties \ref{def:finite} finite and \ref{def:linear} linear fine-grain the scaling of the number of edges---in a truncated scaling (finite), and a linear scaling. 

Property \ref{def:limited} (limited) 
captures the scaling of the maximal interaction range. 
Intuitively, an Ising model is limited if there is an upper bound on its interaction range, 
i.e.\ a $b\in \mathbb{N}$ such that there is no edge $(i,j)$ where $i$ and $j$ are separated by at least $b$ other vertices (i.e.\ $\vert j-i\vert >b$). 
With an exception:  if  $i$ is within the first $b$ vertices or $j$ is within the last $b$ vertices, 
then either of them can have long-range edges, that is, they can be linked to other vertices which are further away than $b$   (see \cref{fig:limited}). 

Finally, the properties finite, limited, linear form a hierarchy. 
Clearly, every finite Ising model is limited.
Proving that every limited Ising model is linear follows from a simple counting argument: 
If $\mc{M}$ is limited, considering $n>2b$ vertices, 
the first and last $b$ vertices are included in at most  $2b  n$ edges, 
and the remaining vertices are included in at most  $(n-2b)  2b$ additional edges. 
In total, we have 
\begin{equation}
\vert (E_{\mc{M}})_n \vert \leq 4 b    n - 4 b^2 
\end{equation}
As per definition  $b$ is independent of the system size, this
shows that $\mc{M}$ is linear.

Overall we have two hierarchies that capture properties of the family of interaction graphs of an Ising model (\cref{fig:mainresult2}). 
The first hierarchy classifies Ising models based on the complexity of their family of interaction graphs; 
specifically, it encodes the interaction graphs as a language ${\sf E}_{\mc{M}}$ and classifies it in the Chomsky hierarchy (purple shapes of \cref{fig:mainresult2}). 
The second hierarchy classifies Ising models based on the scaling of the number of edges, 
as well as the scaling of the maximal interaction range with the system size (red shapes of \cref{fig:mainresult2}). 

\begin{figure*}[t]\centering
    \includegraphics[width=1.2\columnwidth]{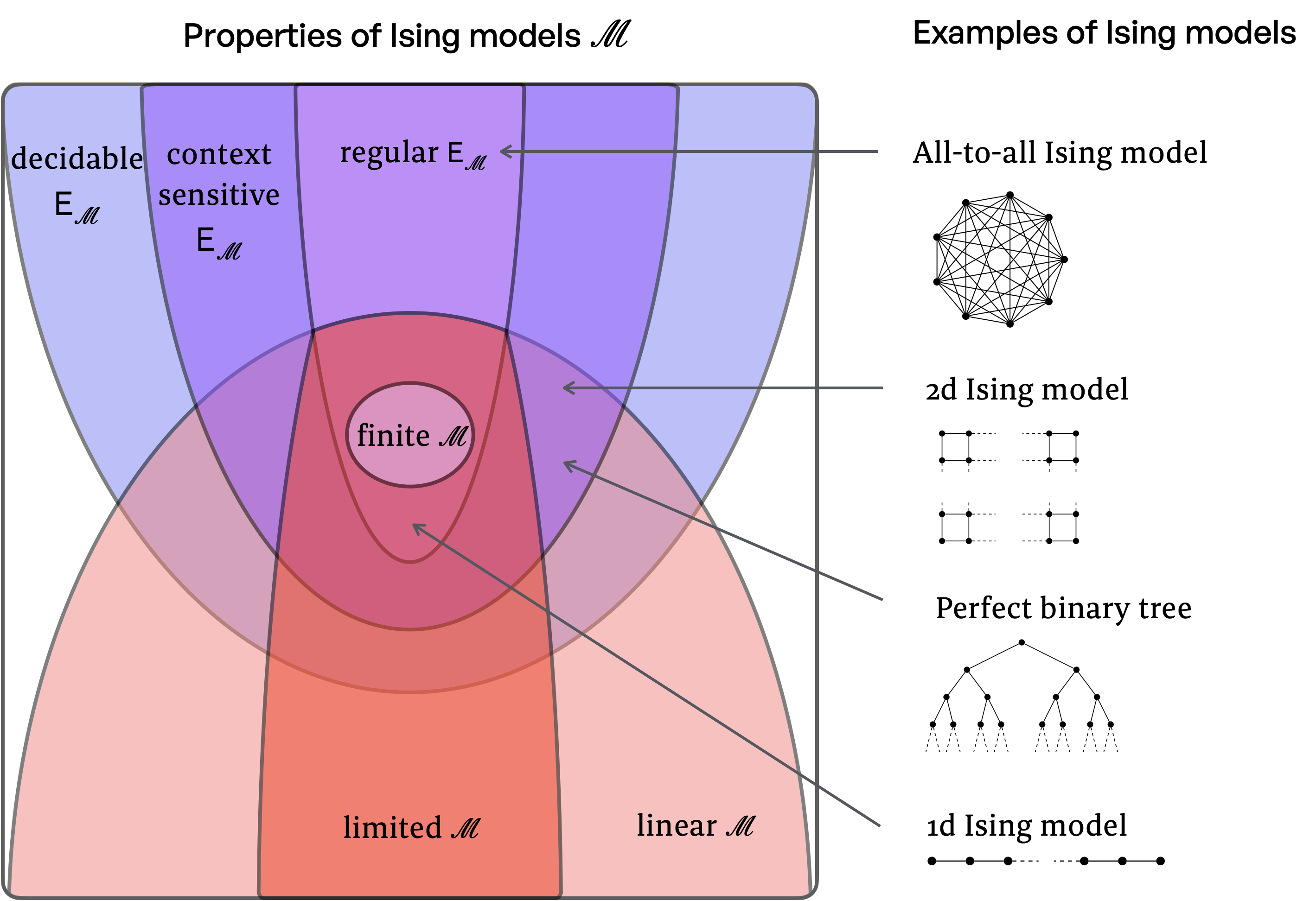}
    \caption{Set diagram of the properties of Ising models. 
    The purple shapes correspond to properties that capture the complexity of the family of interaction graphs, phrased in terms of the complexity of ${\sf E}_{\mc{M}}$ in the Chomsky hierarchy.
    The red shapes correspond to properties that capture the scaling of the connectivity and the number of long-range interactions with the system size. 
    The right-hand side shows where several examples are located in this diagram. 
    Note that both the 2d Ising model and the all-to-all Ising model are constructive context sensitive; yet, 
    the all-to-all model has regular ${\sf E}_{\mc{M}}$ and is not linear (and thus fails to be constructive context free), whereas the 2d model has context sensitive ${\sf E}_{\mc{M}}$ and is linear. 
    }
    \label{fig:mainresult2}
\end{figure*}

\subsection{Main result}
\label{ssec:main}

We are now ready to state the full classification of the complexity of Ising models based on the properties introduced above.

\begin{theorem}[Main result] \label{thm:main}  
Let $\mc{M}$ be an Ising model. 
\begin{enumerate}[label=(\roman*)]
\item 
\label{thm:main:regular}
${\sf L}_{\mc{M}}$ is regular if and only if $\mc{M}$ is finite; 

\item 
\label{thm:main:cf1}
${\sf L}_{\mc{M}}$ is  constructive context free if and only if $\mc{M}$ is linear and  ${\sf E}_{\mc{M}}$ is regular; 

\item 
\label{thm:main:ccs}
${\sf L}_{\mc{M}}$ is constructive context sensitive if and only if  ${\sf E}_{\mc{M}}$ is context sensitive; 

\item 
\label{thm:main:undec}
${\sf L}_{\mc{M}}$ is decidable if and only if ${\sf E}_{\mc{M}}$ is decidable.
\end{enumerate}
\end{theorem}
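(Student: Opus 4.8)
The plan is to prove the four equivalences of \cref{thm:main} separately, in the order \ref{thm:main:regular}, \ref{thm:main:undec}, \ref{thm:main:ccs}, \ref{thm:main:cf1}. The common hub is the correlation identity from the proof of \cref{thm:UniqueLanguage}: the indicator $C_{\mc{M}}(n,i,j)$ of \cref{eqn:corr}, which tells whether $(i,j)\in(E_{\mc{M}})_n$, is a fixed $\mathbb{Z}$-linear combination of four values of $H_{\mc{M}}$ on configurations of length $n$. Hence, on inputs of a given length $n$, the edge language ${\sf E}_{\mc{M}}$ is recoverable from $H_{\mc{M}}$ by $O(n^2)$ such evaluations, and conversely $H_{\mc{M}}(s_1\ldots s_n)=\sum_{(i,j)\in(E_{\mc{M}})_n}h(s_i,s_j)$ is obtained from $(E_{\mc{M}})_n$ by one sweep over the $\binom{n}{2}$ candidate pairs (and $n\in N_{\mc{M}}$ is detected by whether any pair is an edge). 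Each implication below is an instance of this back-and-forth; the only question is whether it fits the resource budget of the machine in play. For \ref{thm:main:regular}: if $\mc{M}$ is finite then $N_{\mc{M}}$ is finite, so ${\sf L}_{\mc{M}}$ is a finite language, hence regular; conversely I argue by contraposition with the pumping lemma---if $N_{\mc{M}}$ is infinite, take $n\in N_{\mc{M}}$ above the pumping length and pump the $0$-block of $0^{n}\bullet{-}^{\,|(E_{\mc{M}})_n|}\in{\sf L}_{\mc{M}}$, forcing $0^{n'}\bullet{-}^{\,|(E_{\mc{M}})_n|}\in{\sf L}_{\mc{M}}$ for arbitrarily large $n'$, which contradicts $|(E_{\mc{M}})_{n'}|\ge\lceil n'/2\rceil$ (no isolated vertices). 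For \ref{thm:main:undec}: a decider for ${\sf L}_{\mc{M}}$ and a decider for ${\sf E}_{\mc{M}}$ simulate each other through the reconstruction above, using a bounded search over $E\in\{-\binom{n}{2},\dots,\binom{n}{2}\}$ to test $n\in N_{\mc{M}}$; this direction is routine.

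For \ref{thm:main:ccs}, the ``if'' direction: from an LBA for ${\sf E}_{\mc{M}}$ I build a constructive LBA for ${\sf L}_{\mc{M}}$ that, on input $s_1\ldots s_n\bullet\sigma^k$, loops over the $\binom{n}{2}$ pairs, writes the length-$n$ string $0^{i-1}10^{j-i-1}10^{n-j}$ on a scratch tape, runs the ${\sf E}_{\mc{M}}$-LBA on it (linear space), accumulates $\sum h(s_i,s_j)$ in binary on the energy tape (logarithmic space), rejects if no pair was an edge, and finally compares the energy tape to $\sigma^k$---this is a constructive LBA in the sense of \cref{def:constructive}. For the ``only if'' direction: from a constructive LBA $M$ for ${\sf L}_{\mc{M}}$ I extract a linear-space routine computing $H_{\mc{M}}$ on a length-$n$ configuration, namely run $M$'s energy-accumulation phase and read the binary value off the energy tape; inputs with $n\notin N_{\mc{M}}$ are handled by the observation that a correct decider cannot be ``spoofed''---if on such an input $M$ reached its energy-comparison step holding some value $v$, it would accept the illegal string with energy $u(v)$---so $M$ must flag the illegal size within linear space, and watching for this flag also decides $n\in N_{\mc{M}}$. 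Feeding four such evaluations into \cref{eqn:corr} then decides ${\sf E}_{\mc{M}}$ in linear space, i.e.\ ${\sf E}_{\mc{M}}$ is context sensitive.

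For \ref{thm:main:cf1}, the ``only if'' direction splits into two parts. Linearity of $\mc{M}$ follows from Parikh's theorem: ${\sf L}_{\mc{M}}\cap(0^*\bullet{-}^*)$ is context free, so its Parikh image is semilinear, and a semilinear set that is (a superset of) the graph of $n\mapsto|(E_{\mc{M}})_n|$ on an infinite domain has each of its infinite linear pieces supported on a single ray, forcing $|(E_{\mc{M}})_n|=O(n)$. Regularity of ${\sf E}_{\mc{M}}$ follows because a constructive PDA can hold only boundedly many spin states in its finite control, which pins the interaction pattern to a ``local plus boundary'' shape governed by finitely much information, so that a finite automaton evaluating \cref{eqn:corr} with the help of $P$'s control recognizes $0^{i-1}10^{j-i-1}10^{n-j}\in{\sf E}_{\mc{M}}$. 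For the ``if'' direction I first show that linearity together with regularity of ${\sf E}_{\mc{M}}$ implies $\mc{M}$ is limited (\cref{def:PropIsing}): pumping the DFA for ${\sf E}_{\mc{M}}$, a long gap between the two $1$s that is anchored near neither end, combined with a pumpable left margin and suffix, would produce quadratically many edges in infinitely many system sizes, contradicting linearity. Then the desired (nondeterministic) constructive PDA nondeterministically guesses the residues of the system size $n$ modulo the periods of that DFA (which, with limitedness, pins down the local interaction pattern), sweeps once over $s_1\ldots s_n$ while keeping a bounded window of recent spins and a fixed register of the first few spins (for the boundary-anchored long-range edges, whose far endpoints it guesses and later verifies), accumulates $H_{\mc{M}}$ on the stack as a single-sign unary counter (adding $\pm1$ by a push or a pop), and finally verifies the guesses and compares the stack to $\sigma^k$; the induced grouping of edges by the position at which both endpoints become available is the required bounded-size partition, so the PDA is constructive.

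The hard part is the ``if'' direction of \ref{thm:main:cf1}: turning the purely algebraic reconstruction of $H_{\mc{M}}$ into an honest single left-to-right pushdown computation. This is where the two hypotheses earn their keep---linearity bounds the stack height, and regularity of ${\sf E}_{\mc{M}}$ both supplies the ``limited'' geometry and makes the adjacency predicate finite-state---and where the constructive bookkeeping demanded by \cref{def:constructive} (the bounded-size edge partition, the guess-and-verify treatment of long-range interactions, the single-sign unary stack counter) must be arranged so that no two unbounded quantities ever compete for the single stack. The secondary delicacy is the ``only if'' directions of \ref{thm:main:ccs} and \ref{thm:main:cf1}, where one is handed only a machine promised to \emph{decide} ${\sf L}_{\mc{M}}$ and must reverse-engineer a well-behaved evaluator of the function $H_{\mc{M}}$, including a correct treatment of inputs with $n\notin N_{\mc{M}}$; I expect this to follow from the no-spoofing observation together with the structural restrictions built into constructive automata, but it is the second point that needs care.
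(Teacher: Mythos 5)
Your proposal follows essentially the same route as the paper: the correlation identity \cref{eqn:corr} is the hub for every ``only if'' direction, part \ref{thm:main:regular} is handled by pumping the $0$-block against the lower bound $\vert (E_{\mc{M}})_{n'}\vert \geq n'/2$ coming from the no-isolated-vertices condition, linearity in \ref{thm:main:cf1} comes from Parikh's theorem plus functionality of the Hamiltonian's graph (which forbids period vectors that do not advance the system size), regularity of ${\sf E}_{\mc{M}}$ comes from the bounded stack usage forced by constructivity, the converse of \ref{thm:main:cf1} goes through the same intermediate ``limited'' property obtained by pumping the FSA for ${\sf E}_{\mc{M}}$, and the constructive PDA is the same guess-and-verify sweep with a shared unary stack counter and a bounded edge partition (the paper organises this as an explicit case analysis over seven finite/periodic edge classes, which your ``residues modulo the periods of the DFA'' plus ``boundary-anchored'' registers reproduce in sketch form).

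One concrete gap. In the ``if'' direction of \ref{thm:main:ccs} you say the constructive LBA ``runs the ${\sf E}_{\mc{M}}$-LBA'' on each candidate edge and accumulates $h(s_i,s_j)$ accordingly. An LBA is a nondeterministic acceptor, so a single simulation only certifies membership in ${\sf E}_{\mc{M}}$; to correctly \emph{omit} the contribution of a non-edge on every computation path (without which some path computes a wrong energy and the overall machine accepts strings outside ${\sf L}_{\mc{M}}$) you must also certify non-membership. The paper handles this by invoking the Immerman--Szelepcs\'enyi theorem to obtain an LBA for the complement of ${\sf E}_{\mc{M}}$ and running both machines on each candidate edge. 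The fix is a single citation, but as written the step does not go through. (The same issue does not arise in \ref{thm:main:undec}, where the hypothesis is a decider, nor in \ref{thm:main:cf1}, where the FSA for ${\sf E}_{\mc{M}}$ can be determinised and complemented for free.)
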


While this theorem is proven in \cref{sec:proofmain}, the statements can be intuitively understood as follows. 

\paragraph*{\ref{thm:main:regular}.} 
A finite Ising model only contains a finite number of system sizes, thus both ${\sf L}_{\mc{M}}$ and ${\sf E}_{\mc{M}}$ are finite languages, which are trivially regular.  
Conversely, by the pumping lemma for regular languages, it follows that infinite Ising models cannot have a regular language.

\paragraph*{\ref{thm:main:cf1}.}
The essence of the argument is the following.  
If $\mc{M}$ is linear and ${\sf E}_{\mc{M}}$ is regular, a constructive PDA for ${\sf L}_{\mc{M}}$ can be built from a FSA for ${\sf E}_{\mc{M}}$.
Deciding ${\sf L}_\mc{M}$ amounts to deciding ${\sf E}_\mc{M}$ and adding the individual energy contributions.
The PDA decides ${\sf E}_\mc{M}$ by running the FSA in its states. As $\mc{M}$ is linear, also adding the individual energy contributions is possible for a PDA.
Note that  if $\mc{M}$ is not finite, adding the individual energy contributions requires at least a PDA, i.e.\ cannot be achieved by a FSA. 
Consequently, deciding the edges can at most require a FSA, since PDAs can use FSAs but not PDAs as subroutines. 
Conversely, if $\mc{M}$ were not linear, then already adding the individual energy contributions would require a LBA (independently of the complexity of ${\sf E}_\mc{M}$), as  it is the case for the all-to-all Ising model (\cref{fig:mainresult2} and \cref{ssec:all2all}). 
Regularity of ${\sf E}_{\mc{M}}$ can be proven by using the constructive PDA for ${\sf L}_{\mc{M}}$  to compute \cref{eqn:corr}. So this PDA can be modified to decide ${\sf E}_{\mc{M}}$. 
From the fact that the PDA is constructive it then follows that the modified PDA only ever uses a single cell of stack memory, so it effectively is a FSA.

\paragraph*{\ref{thm:main:ccs}.}
In contrast to PDAs, LBAs can use LBAs as subroutines. 
This is the key difference between \ref{thm:main:cf1} and \ref{thm:main:ccs}. 
It follows that there is no separation in the complexity of ${\sf E}_\mc{M}$ and ${\sf L}_\mc{M}$, 
i.e.\ in contrast to \ref{thm:main:cf1}, they can be of the same complexity. 
The constructive LBA that decides ${\sf L}_{\mc{M}}$ can be built by using a LBA for deciding ${\sf E}_{\mc{M}}$, to select the individual edges that contribute to the energy, 
and another LBA to sum up these individual energy contributions. 
Conversely, if ${\sf L}_{\mc{M}}$ is constructive context sensitive then a LBA for ${\sf E}_{\mc{M}}$ can be built by modifying the LBA for ${\sf L}_{\mc{M}}$ so that it computes \cref{eqn:corr}.

\paragraph*{\ref{thm:main:undec}.}
Turing machines can also use Turing machines as subroutines. A Turing machine that decides ${\sf L}_{\mc{M}}$ can be built in the same way as the LBA in the previous case, and also the converse direction of \ref{thm:main:undec} works the same way as that of \ref{thm:main:ccs}.

Let us highlight a corollary of the proof of \cref{thm:main}, which will prove useful in the examples of \cref{sec:examples}: 
\begin{corollary} \label{cor:limited} 
If $\mc{M}$ is not limited then ${\sf L}_{\mc{M}}$ is not constructive context free. 
\end{corollary}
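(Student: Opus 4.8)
The plan is to combine \cref{thm:main}\ref{thm:main:cf1} with a pumping argument on the edge language. By \cref{thm:main}\ref{thm:main:cf1}, if ${\sf L}_{\mc{M}}$ is constructive context free then $\mc{M}$ is linear and ${\sf E}_{\mc{M}}$ is regular. Taking the contrapositive, it suffices to prove that if $\mc{M}$ is not limited and ${\sf E}_{\mc{M}}$ is regular, then $\mc{M}$ is not linear; equivalently, that a single ``interior'' long-range edge in a graph $(E_{\mc{M}})_n$ already forces $\Omega(n^2)$ edges in that same graph once ${\sf E}_{\mc{M}}$ is regular. (This is essentially a by-product of the proof of \cref{thm:main}\ref{thm:main:cf1}, but it can also be argued directly as below.)

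First I would record the pumping property I need. Fix a DFA with state set $Q$ and transition function $\delta$ recognising ${\sf E}_{\mc{M}}$, and put $\pi \coloneqq |Q|!$, so that $\delta(q,0^{m}) = \delta(q,0^{m+\pi})$ for every $q \in Q$ and every $m \ge |Q|$ (the orbits of $\delta(\cdot,0)$ all have length dividing $\pi$, with pre-periods at most $|Q|$). Hence, if $0^{a}10^{\beta}10^{\gamma} \in {\sf E}_{\mc{M}}$ with $a,\beta,\gamma \ge |Q|$, then $0^{a'}10^{\beta'}10^{\gamma'} \in {\sf E}_{\mc{M}}$ for \emph{every} triple with $a',\beta',\gamma' \ge |Q|$, $a'+\beta'+\gamma' = a+\beta+\gamma$, and $a' \equiv a$, $\beta' \equiv \beta$, $\gamma' \equiv \gamma \pmod{\pi}$: the DFA then visits exactly the same states at the two internal $1$'s and ends in the same accepting state. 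By \cref{def:EdgeLanguage}, writing $n \coloneqq a+\beta+\gamma+2$, each such triple encodes a \emph{distinct} edge $(a'+1,\,a'+\beta'+2)$ of the \emph{same} graph $(E_{\mc{M}})_n$.

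Next I would use that $\mc{M}$ is not limited: for every $b$ there are $n \in N_{\mc{M}}$ and $(i,j) \in (E_{\mc{M}})_n$ with $i>b$, $j-i>b$, $n-j>b$, so the block lengths $a \coloneqq i-1$, $\beta \coloneqq j-i-1$, $\gamma \coloneqq n-j$ of the associated string all exceed $b-1$ and $n = a+\beta+\gamma+2 > 3b$. Taking $b > |Q|$, the previous paragraph applies. A direct count of admissible pairs $(a',\beta')$ — $a'$ ranges over an arithmetic progression of common difference $\pi$ in roughly $[\,|Q|,\,n/2\,]$, and for each such $a'$ the number of valid $\beta'$ is of order $(n-a')/\pi$ — yields at least $n^2/(16\pi^2)$ distinct edges in $(E_{\mc{M}})_n$ as soon as $n$ exceeds a bound depending only on $|Q|$ and $\pi$. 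Since $|Q|$ and $\pi$ depend only on ${\sf E}_{\mc{M}}$, for any prescribed $k$ I may pick $b$ so large that the resulting $n > 3b$ also satisfies $n^2/(16\pi^2) > kn$; then $|(E_{\mc{M}})_n| > kn$, so $\mc{M}$ is not linear. This gives the contrapositive, hence the corollary.

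The step I expect to be the main obstacle is the length-preserving pumping of the second paragraph: ordinary pumping of one block changes the string length, hence the system size $n$, so on its own it only relates edges living in \emph{different} graphs. The fix is to pump one block down by a multiple of $\pi$ while simultaneously pumping another block up by the same amount, keeping $n$ fixed while sliding the edge through $(E_{\mc{M}})_n$; one must check carefully that equality of the block lengths modulo $\pi$ (together with each being at least $|Q|$) really does force the DFA through the same state sequence. The remaining care is quantitative — making sure the quadratic count genuinely beats the linear bound $kn$, which is why $b$ (and thus $n$) is chosen large relative to the model-dependent constants $|Q|$, $\pi$ and $k$ — and the corner cases where a pumped block would drop below the threshold $|Q|$ are harmless because $a$, $\beta$, $\gamma$ are all enormous.
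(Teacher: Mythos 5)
Your argument is correct and follows essentially the same route as the paper: the paper's proof of this corollary likewise invokes \cref{thm:main}\ref{thm:main:cf1} and then cites step \ref{thm:main:cf1:iii} of its proof (\cref{ssec:proofmain2}), which is exactly your claim that regularity of ${\sf E}_{\mc{M}}$ together with failure of limitedness forces super-linearly many edges. The only difference is cosmetic bookkeeping in that pumping step: the paper pumps all three zero-blocks \emph{up} by multiples of a common loop-length product and counts the compositions $m=m_p+m_q+m_r$, whereas you keep the system size $n$ fixed and slide length between blocks in multiples of $|Q|!$; both produce quadratically many distinct edges at a single length, contradicting linearity.
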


\begin{proof}
From \cref{thm:main} \ref{thm:main:cf1} we know that if ${\sf L}_{\mc{M}}$ is constructive context free then $\mc{M}$ is linear and ${\sf E}_{\mc{M}}$ is regular. 
From \cref{ssec:proofmain2} \ref{thm:main:cf1:iii}, it follows that $\mc{M}$ is limited. The corollary states the contrapositive. 
\end{proof}

\section{Examples: The complexity of Ising models}
\label{sec:examples}

We now consider various Ising models and compute their complexity by applying \cref{thm:main}. 
Specifically, we consider the $1$d Ising model (\cref{ssec:1d}) 
the $1$d Ising model with periodic boundary conditions (\cref{ssec:1dperiodic}), 
the Ising model on ladder graphs (\cref{ssec:ladder}),
the Ising model  on layerwise complete graphs (\cref{ssec:network}),
the $2$d Ising model (\cref{ssec:2D}),
the all-to-all Ising model  (\cref{ssec:all2all}),
and the Ising model on perfect binary trees (\cref{ssec:tree}). 
The results are summarised in \cref{tab:examples}. 

\begin{table*}[th]\centering
    \begin{tabular}{l|l|l|l|l|l}
    Ising model $\mc{M}$ & ${\sf L}_{\mc{M}}$ & ${\sf E}_{\mc{M}}$ & Finite  & Limited  & Linear  \\
    \hline
      $\mc{M}_{{\mr{1d}}}$ & Constructive context free & Regular & No & Yes & Yes \\
    $\mc{M}_{{\mr{circ}}}$ &  Constructive context free & Regular & No & Yes & Yes \\
     $\mc{M}_{{\mr{ladder}}}$ &  Constructive context free & Regular & No & Yes & Yes \\
     $\mc{M}_{{\mr{layer}}}$  &  Constructive context free & Regular & No & Yes & Yes \\
     $\mc{M}_{{\mr{2d}}}$ & Constructive context sensitive & Context sensitive & No & No & Yes \\
      $\mc{M}_{{\mr{all}}}$ &  Constructive context sensitive & Regular & No & No & No \\
     $\mc{M}_{{\mr{tree}}}$ &  Constructive context sensitive &  Context sensitive & No & No & Yes \\
    \end{tabular}
    \caption{The Ising models (first column) considered in \cref{sec:examples}, their complexity (second column) and 
    the properties that determine their complexity (remaining columns). 
    }
    \label{tab:examples}
\end{table*}

\subsection{1d Ising model}
\label{ssec:1d}

The most straightforward example of a constructive context free Ising model uses $1$-dimensional chains as interaction graphs (\cref{fig:1d}).
We denote this model as $\mc{M}_{\mr{1d}}\coloneqq (N_{\mr{1d}}, E_{\mr{1d}})$, defined by
\begin{align}
    \begin{aligned}
        N_{\mr{1d}}&\coloneqq \{n \in \mathbb{N} \mid n \geq 2 \} \\
        (E_{\mr{1d}})_n&\coloneqq \bigl\{ (i,i+1) \mid 1 \leq i \leq n-1\bigr\}
    \end{aligned}
\end{align}
We now use \cref{thm:main} \ref{thm:main:cf1} to prove that the language of $\mc{M}_{\mr{1d}}$ is constructive context free. To that end, let us show that it is linear, and its edge language is regular. 
Linearity is immediate, as for any $n \in N_{\mr{1d}}$, $\vert (E_{\mr{1d}})_n \vert = n-1$.
Also regularity of ${\sf E}_{\mr{1d}}$ can be concluded straightforwardly, as ${\sf E}_{\mr{1d}} = 0^*110^*$  (where $0^*$ denotes the concatenation of any finite number of 0s, including the empty one). 
Since $\mc{M}_{\mr{1d}}$ is clearly not finite, 
by \cref{thm:main} \ref{thm:main:regular} ${\sf L}_{\mr{1d}}$ is not regular.

\subsection{1d Ising model with periodic boundary conditions}
\label{ssec:1dperiodic}
A second Ising model with constructive context free language is obtained by taking circles  as interaction graphs (\cref{fig:circle}).
We denote this model as $\mc{M}_{\mr{circ}}\coloneqq (N_{\mr{circ}},E_{\mr{circ}})$, where
\begin{align}
    \begin{aligned}
        N_{\mr{circ}}&\coloneqq \{n \in \mathbb{N} \mid n \geq 3 \}\\
        (E_{\mr{circ}})_n &\coloneqq (E_{\mr{1d}})_n \cup \{(1,n)\} 
    \end{aligned}
\end{align}
Again, as $\vert (E_{\mr{circ}})_n\vert = n$, ${\mc M}_{\mr{circ}}$ clearly is linear. Besides,  
\begin{equation}
{\sf E}_{\mr{circ}}= {\sf E}_{\mr{1d}} \cup 100^*1 
\end{equation} 
is a union of regular languages, so it is regular. So according to \cref{thm:main} \ref{thm:main:cf1} ${\sf L}_{\mr{circ}}$ is constructive context free. 
As $\mc{M}_{\mr{circ}}$  is not finite, according to \cref{thm:main} \ref{thm:main:regular} ${\sf L}_{\mr{circ}}$ is not regular.

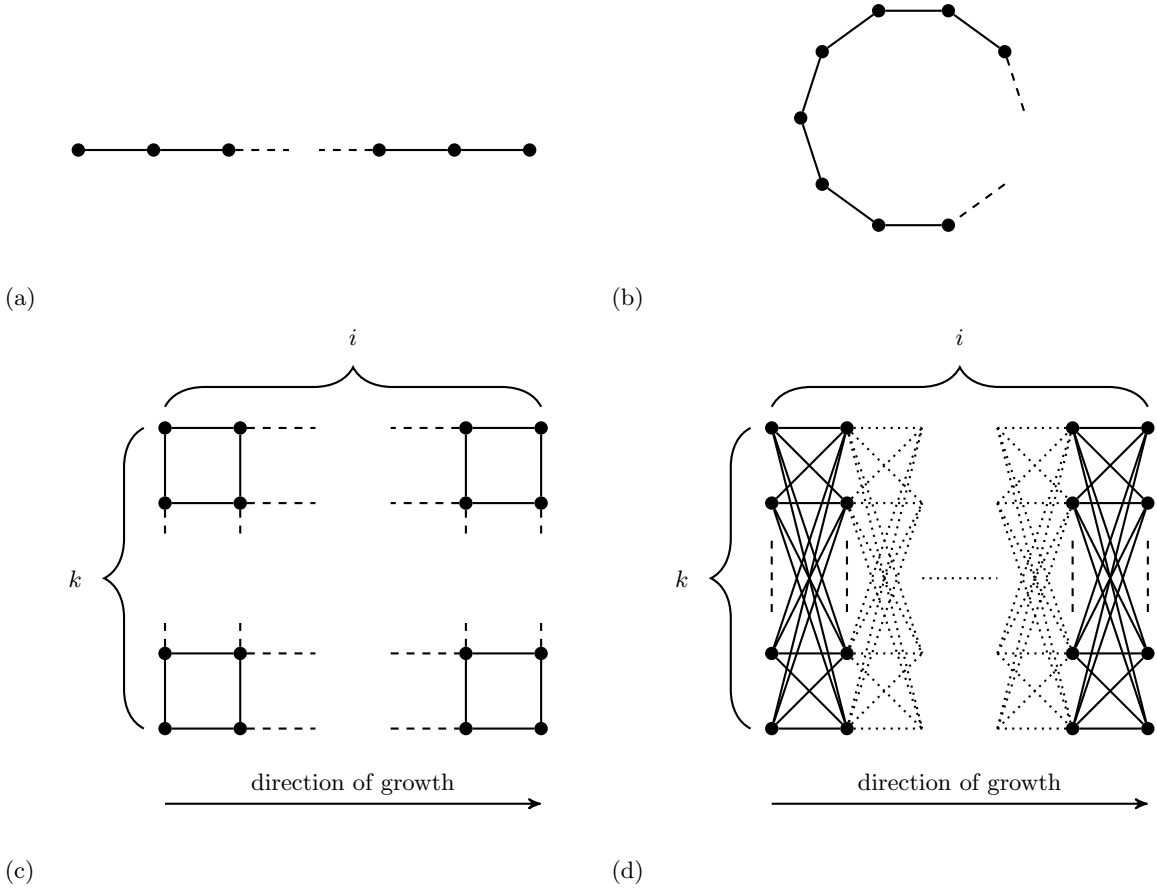
\begin{figure*}[t]
    \centering
    \begin{subfigure}[b]{.45\textwidth}
        \centering
            \vspace{1cm}
            \begin{tikzpicture}[->,>=stealth',auto,
              thick,main node/.style={circle,draw,fill=black, inner sep=1.5pt}]{
            \node[main node,label=above:{}] (1) at (-1,0) {};
            \node[main node,label=above:{}] (2) at (0,0) {};
            \node[main node,label=above:{}] (3) at (1,0) {};
            \node[main node,label=above:{}] (6) at (3,0) {};
            \node[main node,label=above:{}] (7) at (4,0) {};
            \node[main node,label=above:{}] (8) at (5,0) {};
            
            \coordinate (9) at (1.8,0);
            \coordinate (10) at (2.2,0);
            
            \draw [-] (1) -- (2);
            \draw [-] (2) -- (3);
            \draw [-] (6) -- (7);
            \draw [-] (7) -- (8);
            
            \draw [dashed, -] (3) -- (9); 
            \draw [dashed, -] (6) -- (10);
            }
            \end{tikzpicture}
            \vspace{1.5cm}
             \caption{}
            \label{fig:1d}
    \end{subfigure}%
    \begin{subfigure}[b]{.45\textwidth}
        \centering
            \begin{tikzpicture}[->,>=stealth',auto,
              thick,main node/.style={circle,draw,fill=black,inner sep=1.5pt}]{
            
            \foreach \a in {1,2,...,8}{
            \node[main node] (\a) at (\a*360/10: 1.5cm) {};
            }
            \coordinate (9) at (9*360/10: 1.5cm);
            \coordinate (10) at (360: 1.5cm);
            
            \draw [-] (1) -- (2);
            \draw [-] (3) -- (2);
            \draw [-] (3) -- (4);
            \draw [-] (4) -- (5);
            \draw [-] (5) -- (6);
            \draw [-] (7) -- (6);
            \draw [-] (7) -- (8);
            \draw [-, dashed] (1) -- (10);
            \draw [-, dashed] (9) -- (8);
            }
            \end{tikzpicture}
            \vspace{0.5cm}
             \caption{}
            \label{fig:circle}
    \end{subfigure}
    \begin{subfigure}[t]{.45\textwidth}
        \centering
            \begin{tikzpicture}[->,>=stealth',auto,
                thick,main node/.style={circle,draw,fill=black,inner sep=1.5pt}]{
              
              \node[main node] (1) at (0,0) {};
              \node[main node] (2) at (1,0) {};
              \node[main node] (3) at (0,-1) {};
              \node[main node] (4) at (1,-1) {};
              \coordinate (5) at (2,0);
              \coordinate (6) at (2,-1);
              
              \draw[-] (1) -- (2);
              \draw[-] (1) -- (3);
              \draw[-] (2) -- (4);
              \draw[-] (3) -- (4);
              
              \draw[-, dashed] (2) -- (5);
              \draw[-, dashed] (4) -- (6);
              
              \node[main node] (7) at (4,0) {};
              \node[main node] (8) at (5,0) {};
              \node[main node] (9) at (4,-1) {};
              \node[main node] (10) at (5,-1) {};
              \coordinate (11) at (3,0);
              \coordinate (12) at (3,-1);
              
              \draw[-] (7) -- (8);
              \draw[-] (7) -- (9);
              \draw[-] (8) -- (10);
              \draw[-] (9) -- (10);
              
              \draw[-, dashed] (7) -- (11);
              \draw[-, dashed] (9) -- (12);

              \coordinate (13) at (0,-1.5);
              \coordinate (14) at (1,-1.5);
              \coordinate (15) at (4,-1.5);
              \coordinate (16) at (5,-1.5);
              
              \draw[-, dashed] (3) -- (13);
              \draw[-, dashed] (4) -- (14);
              \draw[-, dashed] (9) -- (15);
              \draw[-, dashed] (10) -- (16);
                    
              \node[main node] (17) at (0,-3) {};
              \node[main node] (18) at (1,-3) {};
              \node[main node] (19) at (0,-4) {};
              \node[main node] (20) at (1,-4) {};
                        
              \node[main node] (21) at (4,-3) {};
              \node[main node] (22) at (5,-3) {};
              \node[main node] (23) at (4,-4) {};
              \node[main node] (24) at (5,-4) {};
              
              \draw[-] (17) -- (18);
              \draw[-] (17) -- (19);
              \draw[-] (18) -- (20);
              \draw[-] (19) -- (20);
              
              \draw[-] (21) -- (22);
              \draw[-] (21) -- (23);
              \draw[-] (22) -- (24);
              \draw[-] (23) -- (24);
              
              \coordinate (25) at (0,-2.5);
              \coordinate (26) at (1,-2.5);
              \coordinate (27) at (4,-2.5);
              \coordinate (28) at (5,-2.5);
              
              \draw[-, dashed] (17) -- (25);
              \draw[-, dashed] (18) -- (26);
              \draw[-, dashed] (21) -- (27);
              \draw[-, dashed] (22) -- (28);
              
              \coordinate (29) at (2,-3);
              \coordinate (30) at (3,-3);
              \coordinate (31) at (2,-4);
              \coordinate (32) at (3,-4);
              
              \draw[-, dashed] (18) -- (29);
              \draw[-, dashed] (30) -- (21);
              \draw[-, dashed] (20) -- (31);
              \draw[-, dashed] (32) -- (23);
              
              \draw[-, decorate,decoration={brace,amplitude=15pt},xshift=-8pt,yshift=0pt]
              (0,-4) -- (0,0) node [black,midway, xshift=-20pt] {$k$};

              \draw[-, decorate,decoration={brace,amplitude=15pt},xshift=0pt,yshift=8pt]
              (0,0) -- (5,0) node [black,midway, yshift=20pt] {$i$};
              
              \draw[->] (0,-5) -- (5,-5) node [midway] {direction of growth};  
              }
              \end{tikzpicture}
              \vspace{0.5cm}
            \caption{}\label{fig:ladder}
    \end{subfigure}%
    \begin{subfigure}[t]{.45\textwidth}
        \centering
            \begin{tikzpicture}[->,>=stealth',auto,
                thick,main node/.style={circle,draw,fill=black,inner sep=1.5pt}]{
              
              \node[main node] (1) at (0,0) {};
              \node[main node] (2) at (0,-1) {};
              \node[main node] (3) at (0,-3) {};
              \node[main node] (4) at (0,-4) {};
              \draw[-, dashed] (0,-1.5) -- (0,-2.5);
              
              \node[main node] (5) at (1,0) {};
              \node[main node] (6) at (1,-1) {};
              \node[main node] (7) at (1,-3) {};
              \node[main node] (8) at (1,-4) {};
              \draw[-, dashed] (1,-1.5) -- (1,-2.5);
     
              \foreach \x in {1,...,4}
                  \foreach \y in {5,...,8}  
                    \draw[-] (\x)--(\y);
                    
              \node[main node] (9) at (4,0) {};
              \node[main node] (10) at (4,-1) {};
              \node[main node] (11) at (4,-3) {};
              \node[main node] (12) at (4,-4) {};
              \draw[-, dashed] (4,-1.5) -- (4,-2.5);
              
              \node[main node] (13) at (5,0) {};
              \node[main node] (14) at (5,-1) {};
              \node[main node] (15) at (5,-3) {};
              \node[main node] (16) at (5,-4) {};
              \draw[-, dashed] (5,-1.5) -- (5,-2.5);
              
              \foreach \x in {9,...,12}
                  \foreach \y in {13,...,16}  
                    \draw[-] (\x)--(\y);
                    
              \coordinate (17) at (2,0);
              \coordinate (18) at (2,-1);
              \coordinate (19) at (2,-3);
              \coordinate (20) at (2,-4);
              
              \foreach \x in {5,...,8}
                  \foreach \y in {17,...,20}  
                    \draw[-, dotted] (\x)--(\y);
                    
              \coordinate (21) at (3,0);
              \coordinate (22) at (3,-1);
              \coordinate (23) at (3,-3);
              \coordinate (24) at (3,-4);
              
              \foreach \x in {21,...,24}
                  \foreach \y in {9,...,12}  
                    \draw[-, dotted] (\x)--(\y);
                    
              \draw[-, dotted] (2,-2) -- (3,-2);

              \draw[-, decorate,decoration={brace,amplitude=15pt},xshift=-8pt,yshift=0pt]
              (0,-4) -- (0,0) node [black,midway, xshift=-20pt] {$k$};

              \draw[-, decorate,decoration={brace,amplitude=15pt},xshift=0pt,yshift=8pt]
              (0,0) -- (5,0) node [black,midway, yshift=20pt] {$i$};
              
              \draw[->] (0,-5) -- (5,-5) node [midway] {direction of growth};
              }
            \end{tikzpicture}
            \vspace{0.5cm}
            \caption{}\label{fig:network}
    \end{subfigure}
    \caption{Interaction graphs of $\mc{M}_{\mr{1d}}$ (\ref{fig:1d}), $\mc{M}_{\mr{circ}}$ (\ref{fig:circle}), $\mc{M}_{\mr{ladder}}$ (\ref{fig:ladder}), and $\mc{M}_{\mr{layer}}$ (\ref{fig:network}).
    All these Ising models are constructive context free. Intuitively, this is because their interaction graphs all have one distinguished dimension along which an elementary building block (that contains a constant number $k$ spins) is repeated ($i$ times) in a periodic fashion.
    In (\ref{fig:1d}) and (\ref{fig:circle}) there is only one dimension, in (\ref{fig:ladder}) and (\ref{fig:network}) the distinguished dimension is indicated as``direction of growth".
    This property is made precise in \cref{thm:main} \ref{thm:main:cf1}: constructive context free Ising models are uniquely characterised by  ${\sf E}_{\mc{M}}$ being regular and $\mc{M}$ being linear (or limited according to \cref{ssec:proofmain2} \ref{thm:main:cf1:iii}).}
            \label{fig:CCF}
\end{figure*}

\subsection{Ising model on a ladder graph}
\label{ssec:ladder}

Another class of Ising models with constructive context free languages is obtained by considering generalised ladder graphs as interaction graphs.  
For each such model the interaction graphs are given by a family of d-dimensional lattices, 
such that all lattices of the family have equal size along all but one dimension, i.e.\ increasing the system size amounts to adding spins along one distinguished dimension. 
It follows from \cref{thm:main} \ref{thm:main:cf1} that all these Ising models are constructive context free. 

To illustrate this, we consider $2$-dimensional ladders with constant width $k$ (\cref{fig:ladder}). The corresponding Ising model $\mc{M}_{\mr{ladder}}$ is defined by 
\begin{align}
    \begin{aligned}
        N_{\mr{ladder}}&\coloneqq  \{ i  k \mid i \geq 2 \} \\
        (E_{\mr{ladder}})_{i  k} &\coloneqq  (E_{\mr{ladder}})_{i  k}^{\mr{ver}} \cup (E_{\mr{ladder}})_{i  k}^{\mr{hor}}\\  
    \end{aligned}
\end{align}    
where
\begin{multline}
    (E_{\mr{ladder}})_{i  k}^{\mr{ver}} \coloneqq  \{(j  k+l,j  k+l+1) \mid \\
    0\leq j \leq i-1, \ 1 \leq l \leq k-1\}   
\end{multline}
contains the vertical edges and 
\begin{multline}
    (E_{\mr{ladder}})_{i  k}^{\mr{hor}} \coloneqq \{(j  k+l,(j+1)  k+l) \mid \\
    0 \leq j \leq i-2, \  1 \leq l \leq k \}    
\end{multline}
contains the horizontal edges.

$\mc{M}_{\mr{ladder}}$ is linear, as $\vert (E_{\mathrm{ladder}})_{n=i  k} \vert= 2i  k -i-k < 2  n$. 
Moreover, its edge language ${\sf E}_{\mr{ladder}}$ is regular, since
it can be written as a finite union of regular expressions
\begin{align}
    \begin{aligned}
    {\sf E}_{\mr{ladder}} &=  {\sf E}_{\mr{ladder}}^{\mr{ver}} \cup {\sf E}_{\mr{ladder}}^{\mr{hor}}\\
    {\sf E}_{\mr{ladder}}^{\mr{ver}} & \coloneqq \bigcup_{1\leq l \leq k-1} \bigl ((0^k)^*0^k0^{l-1}110^{k-l-1}(0^k)^* \\
    & \hphantom{\coloneqq \bigcup_{1\leq l \leq k-1} \bigl (} \cup (0^k)^*0^{l-1}110^{k-l-1}0^k(0^k)^*  \bigr ) \\
    {\sf E}_{\mr{ladder}}^{\mr{hor}}  &\coloneqq  \bigcup_{1\leq l \leq k}   (0^k)^* 
    0^{l-1}10^{k-1} 1 0^{k-l}(0^k)^* 
    \end{aligned}
\end{align}
From \cref{thm:main} \ref{thm:main:cf1} it follows that ${\sf L}_{\mr{ladder}}$ is constructive context free.
In addition, $\mc{M}_{\mr{ladder}}$ is not finite, so by \cref{thm:main}\ref{thm:main:regular} ${\sf L}_{\mr{ladder}}$ is not regular.

\subsection{Ising model on  layerwise complete graph}
\label{ssec:network}

Also layerwise complete graphs, which are used in many neural network models, 
define a class of Ising models with constructive context free language. 
To see this, consider interaction graphs  composed of $i$ layers of $k$ vertices, 
such that there is no edge between vertices within the same layer, 
and any two vertices from neighbouring layers are connected (\cref{fig:network}).
The corresponding Ising model  $\mc{M}_{\mr{layer}}$ is defined as 
\begin{align}
\begin{aligned}
    N_{\mr{layer}}&\coloneqq \{ i  k \mid i \geq 2 \} \\
    (E_{\mr{layer}})_{i  k} &\coloneqq \{(j  k+l,(j+1)  k+r) \mid \\
    & \hphantom{\coloneqq} 0 \leq  j \leq i-2,\  1 \leq l \leq k, \  1 \leq  r  \leq k\} 
\end{aligned}
\end{align}
Since 
\begin{align}
\vert (E_{\mr{layer}})_{n=i  k} \vert = (i-1)  k^2 < k  n
\end{align}
$\mc{M}_{\mr{layer}}$ is linear. 
Regularity of ${\sf E}_{\mr{layer}}$ can be seen from
\begin{align}
    {\sf E}_{\mr{layer}} = \bigcup_{1 \leq l \leq k, \ 1 \leq r \leq k} (0^k)^* 0^{l-1}10^{k-l} 0^{r-1}10^{k-r} (0^k)^*
\end{align}
Using \cref{thm:main} \ref{thm:main:cf1} it follows that ${\sf L}_{\mr{layer}}$ is constructive context free.
As $\mc{M}_{\mr{layer}}$ is not finite, by  \cref{thm:main} \ref{thm:main:regular} we conclude that ${\sf L}_{\mr{layer}}$ is not regular.

\subsection{2d Ising model}
\label{ssec:2D}

Let us now show that the Ising model on $2$d square lattices has a constructive context sensitive language. 
We denote the 2d Ising model as $\mc{M}_{\mr{2d}}$, and define it as
\begin{align}
    \begin{aligned}
        N_{\mr{2d}} &\coloneqq \{n^2 \mid n\geq 2 \} \\
        (E_{\mr{2d}})_{n^2} & \coloneqq (E_{\mr{2d}})_{n^2}^{\mr{hor}} \cup (E_{\mr{2d}})_{n^2}^{\mr{ver}}
    \end{aligned}
\end{align}
The edge set of size $n^2$ is split in horizontal and vertical edges:
\begin{align}\label{def:2dEdgeSplit}
    \begin{aligned}
        (E_{\mr{2d}})_{n^2}^{\mr{hor}} & \coloneqq \{ (i,i+1) \mid 1 \leq i \leq n^2-1, \ i \notin n\mathbb{N}   \} \\
        (E_{\mr{2d}})_{n^2}^{\mr{ver}} & \coloneqq  \{(i, i+n) \mid 1 \leq i \leq n^2-n \} 
    \end{aligned}
\end{align}
Its family of interaction graphs can be seen in \cref{fig:ladder}, 
with the difference that for $\mc{M}_{\mr{2d}}$, when increasing the system size both dimensions are scaled up simultaneously.

Using \cref{thm:main}\ref{thm:main:ccs}  we now prove that ${\sf L}_{\mr{2d}}$ is constructive context sensitive  by showing that its edge language ${\sf E}_{\mr{2d}}$ is context sensitive.
To this end, we build a  LBA that decides ${\sf E}_{\mr{2d}}$. 
Asserting that the input $w_1 \ldots w_m$ is well-formed, 
i.e.\ of the form $0^*10^*10^*$, can be achieved by a FSA, which can be simulated by the LBA. 
We can thus w.l.o.g.\ assume that the input is well-formed.
Next the LBA checks if $m=n^2$ for some natural number $n$. This is done by iterating over natural numbers $n$, starting with $n=1$.
In each step of the iteration the LBA computes $n^2$ and checks if this matches the length of the input $m$. If $n^2=m$ this subroutine terminates, if $n^2<m$ the LBA moves on with the next natural number $n+1$, 
and if $n^2>m$ the LBA rejects the input, as then $m$ is not a square number, i.e.\ not in $N_{\mr{2d}}$.
Explicitly, we use $(n+1)^2=n^2+2n+1$ to compute the square numbers. 
The LBA uses an additional tape $T_{n}$ to store $n$ in unary.  
Initially $n=1$ and the head of the LBA is placed over the first cell of the input tape. 
The LBA enters a loop: 
The head moves $2n+1$ cells to the right on the input tape. Note that this places the head over cell $(n+1)^2$.
The LBA now checks if the current cell is empty.
If yes, then $m\notin N_{\mr{2d}}$ and the LBA rejects the input.
If no, the LBA checks if the next cell is empty.
If no, it increases $n$ by one in the additional tape, and starts again. 
If yes, it accepts. 

Now the LBA traverses the input until its head reaches the first $1$.
While doing so it uses another additional tape $T_{f}$ to count the position of that first $1$ in the input string, $f$.
Then the LBA counts the number of $0$s between the first and the second $1$, $z$, and stores it in another additional tape  $T_{z}$.
Finally, it accepts the input 
if either $z=n-1$, corresponding to a vertical edge, 
or if $z=0$ and there exists no $k$ satisfying $f=k   n$ (this can be done since $n$ is written on $T_{n}$), 
corresponding to a horizontal edge.  

Note that, as 
\begin{equation}
\vert (E_{\mr{2d}})_{n^2} \vert = 2n(n-1) < 2n^2
\end{equation}
$\mc{M}_{\mr{2d}}$ is linear. However, for any $n\in \mathbb{N}$, 
\begin{equation}
(n,2n)\in (E_{\mr{2d}})_{n^2} 
\end{equation}
and hence $\mc{M}_{\mr{2d}}$ is not limited.  
So by \cref{cor:limited} ${\sf L}_{\mr{2d}}$ is not constructive context free.

We now show that ${\sf L}_{\mr{2d}}$ is not context free by using the pumping lemma for context free languages \cite{Ko97}. 
Assume that ${\sf L}_{\mr{2d}}$ was context free and let $p$ be the pumping length of ${\sf L}_{\mr{2d}}$.  Now consider
\begin{align}
l \coloneqq 0^{p^2}\bullet -^{2 p (p-1)} 
\end{align}
Note that $l \in {\sf L}_{\mr{2d}}$, as a configuration of $p^2$ spins has $2   p   (p-1)$ edges. 
When writing $l=uvwxy$, $v$ must be a non-empty string of $0$ symbols (let $\vert v \vert =:  k$),  while $x$ must be a non-empty string of $-$ symbols. 
Otherwise, pumping up $l$ would yield a mismatch between spin configuration and energy. 
Since $\vert vwx \vert \leq p$ we also have that $k \leq p$, so 
$uv^2wx^2y$ yields a  configuration with $p^2 + k$ spins. 
But $p^2 < p^2 +k < (p+1)^2$, so $p^2+k \notin N_{\mr{2d}}$ and thus $uv^2wx^2y \notin {\sf L}_{\mr{2d}}$. Hence, ${\sf L}_{\mr{2d}}$ is not context free.

\subsection{All-to-all Ising model}
\label{ssec:all2all}

Also the all-to-all Ising model (\cref{fig:all2all}), i.e.\ the Ising model with complete interaction graphs, 
has a constructive context sensitive language. We denote this model as $\mc{M}_{\mr{all}}$, and define it by 
\begin{align}
    \begin{aligned}
        N_{\mr{all}} & \coloneqq \{ n \mid n \geq 2 \} \\
        (E_{\mr{all}})_n & \coloneqq \{ (i,j) \mid 1 \leq i < j \leq n\}
    \end{aligned}
\end{align}  
Its edge language ${\sf E}_{\mr{all}}$ is regular, since 
\begin{equation}
{\sf E}_{\mr{all}}= 0^*10^*10^*
\end{equation}
Thus ${\sf E}_{\mr{all}}$ is in particular context sensitive and by \cref{thm:main} \ref{thm:main:ccs}, ${\sf L}_{\mr{all}}$
is constructive context sensitive. 
Since
\begin{align}
\vert (E_{\mr{all}})_n \vert = \frac{1}{2} n (n-1)
\end{align}
$\mc{M}_{\mr{all}}$ is not linear and, by \cref{thm:main} \ref{thm:main:cf1}, 
${\sf L}_{\mr{all}}$ is not constructive context free. 
In fact, by the pumping lemma \cite{Ko97}, ${\sf L}_{\mr{all}}$ can be proven to be not  context free. Assume ${\sf L}_{\mr{all}}$ was context free and denote its pumping length by $p$. 
Now take 
\begin{align}
    l \coloneqq 0^{p}\bullet -^{\frac{p (p-1)}{2}} \in {\sf L}_{\mr{all}}
\end{align}
Writing  $l = uvwxy$ as required by the pumping lemma, $v$ must be a non-empty string of $0$s and $x$ must a non-empty string of $-$ symbols.
Pumping up once yields $k\coloneqq \vert v \vert $ new spin symbols and thus increases the overall energy by 
\begin{equation}
e \coloneqq k  p+\frac{1}{2}  k    (k-1)
\end{equation}
Hence it must be the case that $x=-^e$. 
Pumping up a second time additionally adds
\begin{equation}
k   (p+k)+ \frac{1}{2}  k   (k-1) = e+k^2
\end{equation}  
more pair interactions but only $e$ more $-$ symbols. Thus, there is a mismatch between spin configuration and energy, and hence $uv^3wx^3y \notin {\sf L}_{\mr{all}}$. Therefore, ${\sf L}_{\mr{all}}$ is not context free. 

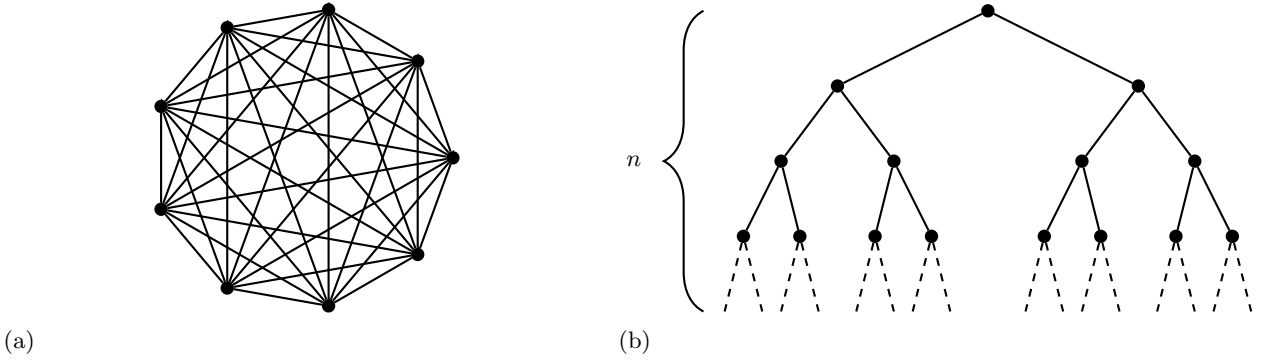
\begin{figure*}[t]
    \centering
    \begin{subfigure}[b]{.45\textwidth}
        \centering
            \begin{tikzpicture}[->,>=stealth',auto,
                thick,main node/.style={circle,draw,fill=black,inner sep=1.5pt}]{
              
              \foreach \a in {1,2,...,9}{
              \node[main node] (\a) at (\a*360/9: 2cm) {};
              \foreach \b in {1,...,\a}{
              \draw [-] (\a) -- (\b);
              }
              }
              }
              \end{tikzpicture}
            \caption{}\label{fig:all2all}
    \end{subfigure}
    \begin{subfigure}[b]{.45\textwidth}
        \centering
        \begin{tikzpicture}[ ->,>=stealth',auto,
            thick,main node/.style={circle,draw,fill=black,inner sep=1.5pt}]{
          \node[main node] (1) at (0,0) {};
          \node[main node] (2) at (-2,-1) {};
          \node[main node] (3) at (2,-1) {};
          \node[main node] (4) at (-2.75,-2) {};
          \node[main node] (5) at (-1.25,-2) {};
          \node[main node] (6) at (1.25,-2) {};
          \node[main node] (7) at (2.75,-2) {};
          \node[main node] (8) at (-3.25,-3) {};
          \node[main node] (9) at (-2.5,-3) {};
          \node[main node] (10) at (-1.5,-3) {};
          \node[main node] (11) at (-0.75,-3) {};
          \node[main node] (12) at (0.75,-3) {};
          \node[main node] (13) at (1.5,-3) {};
          \node[main node] (14) at (2.5,-3) {};
          \node[main node] (15) at (3.25,-3) {};

          \coordinate (16) at (-3.5,-4);
          \coordinate (17) at (-3,-4);
          \coordinate (18) at (-2.75,-4);
          \coordinate (19) at (-2.25,-4);
          \coordinate (20) at (-1.75,-4);
          \coordinate (21) at (-1.25,-4);
          \coordinate (22) at (-1,-4);
          \coordinate (23) at (-0.5,-4);
          \coordinate (24) at (0.5,-4);
          \coordinate (25) at (1,-4);
          \coordinate (26) at (1.25,-4);
          \coordinate (27) at (1.75,-4);
          \coordinate (28) at (2.25,-4);
          \coordinate (29) at (2.75,-4);
          \coordinate (30) at (3,-4);
          \coordinate (31) at (3.5,-4);

          \draw[-] (1) -- (2);
          \draw[-] (1) -- (3);
          \draw[-] (2) -- (4);
          \draw[-] (2) -- (5);
          \draw[-] (3) -- (6);
          \draw[-] (3) -- (7);
          \draw[-] (4) -- (8);
          \draw[-] (4) -- (9);
          \draw[-] (5) -- (10);
          \draw[-] (5) -- (11);
          \draw[-] (6) -- (12);
          \draw[-] (6) -- (13);
          \draw[-] (7) -- (14);
          \draw[-] (7) -- (15);

          \draw[-, dashed] (8) -- (16);
          \draw[-, dashed] (8) -- (17);
          \draw[-, dashed] (9) -- (18);
          \draw[-, dashed] (9) -- (19);

          \draw[-, dashed] (10) -- (20);
          \draw[-, dashed] (10) -- (21);
          \draw[-, dashed] (11) -- (22);
          \draw[-, dashed] (11) -- (23);

          \draw[-, dashed] (12) -- (24);
          \draw[-, dashed] (12) -- (25);
          \draw[-, dashed] (13) -- (26);
          \draw[-, dashed] (13) -- (27);

          \draw[-, dashed] (14) -- (28);
          \draw[-, dashed] (14) -- (29);
          \draw[-, dashed] (15) -- (30);
          \draw[-, dashed] (15) -- (31);
          \draw[-, decorate,decoration={brace,amplitude=15pt},xshift=-8pt,yshift=0pt]
              (-3.5,-4) -- (-3.5,0) node [black,midway, xshift=-20pt] {$n$};
          }
          \end{tikzpicture}
            \caption{}\label{fig:tree}
    \end{subfigure}
\caption{Interaction graphs of $\mc{M}_{\mr{all}}$ (\ref{fig:all2all}) and $\mc{M}_{\mr{tree}}$ (\ref{fig:tree}).
(\ref{fig:tree}) shows a perfect binary tree that contains $2^n-1$ vertices, and thus consists of $n$ individual levels. 
Increasing the system size in $\mc{M}_{\mr{tree}}$ adds entire levels to the tree. 
These Ising models, as well as  $\mc{M}_{\mr{2d}}$, have constructive context sensitive language. 
$\mc{M}_{\mr{all}}$ has regular edge language, but fails to be constructive context free as it is not linear.
In contrast, $\mc{M}_{\mr{tree}}$ and $\mc{M}_{\mr{2d}}$ have context sensitive edge languages and are linear. 
\cref{thm:main} \ref{thm:main:ccs} states that the crucial property for ${\sf L}_{\mc{M}}$ to be constructive context sensitive is that  ${\sf E}_{\mc{M}}$ be context sensitive.
}
\end{figure*}

\subsection{Ising model on perfect binary trees}
\label{ssec:tree}

Next we consider the Ising model $\mc{M}_{\mr{tree}}$ that uses perfect binary trees as interaction graphs (\cref{fig:tree}). This model is defined by 
\begin{align}
    \begin{aligned}
        N_{\mr{tree}} & \coloneqq \{ 2^n-1 \mid n \geq 2 \} \\
        (E_{\mr{tree}})_{2^n-1} & \coloneqq (E_{\mr{tree}})_{2^n-1}^{\mr{left}} \cup (E_{\mr{tree}})_{2^n-1}^{\mr{right}} 
    \end{aligned}
\end{align}
where the edge set of size $2^n-1$ is split into those that connect the parent vertex to its left child vertex and those that connect the parent vertex to its right child vertex:
\begin{align}\label{eq:edge-tree}
    \begin{aligned}
        (E_{\mr{tree}})_{2^n-1}^{\mr{left}} & \coloneqq \{ (i,2i) \mid 1 \leq  i \leq 2^{n-1}-1\}\\
        (E_{\mr{tree}})_{2^n-1}^{\mr{right}} & \coloneqq \{ (i, 2i+1) \mid 1 \leq i \leq 2^{n-1}-1 \} 
    \end{aligned}
\end{align} 
In order to apply \cref{thm:main} \ref{thm:main:ccs} we need to prove that ${\sf E}_{\mr{tree}}$ is context sensitive. To this end, consider the language that only encodes the system sizes $N_{\mr{tree}}$, 
\begin{align}
    {\sf N}_{\mr{tree}} \coloneqq \{ w_1 \ldots w_{2^n-1} \mid w_i \in \{0,1\}, \  n \geq 2 \} 
\end{align}
We now show that this language is context sensitive by constructing a LBA that decides it. Given an input string 
\begin{equation}
w_1 \ldots w_m \in \{0,1\}^*
\end{equation} 
the LBA checks if $m=2^n-1$ for some natural number $n$.
It does so by using an additional tape $T_{2^n}$ to store $2^n$ in unary. 
It starts with $n=1$ (so that $2^n=2$), and the head placed on the first cell of the input tape. 
Then it enters the following loop.  
It moves the head on the input tape $2^n$ cells to the right (this is possible because $2^n$ is stored on the additional tape).  
If the current cell is the last non-empty cell, it accepts. 
If the cell is empty, it rejects. 
Else, it doubles the number of symbols on the additional tape (so that it now contains $2^{n+1}$), moves its head back to the beginning of the input tape, and continues with the first step of the loop. 
This shows that ${\sf N}_{\mr{tree}}$ is context sensitive.

Next, note that the edge language is given by 
\begin{align} \label{eq:Etree}
     {\sf E}_{\mr{tree}}= \bigl ( {\sf E}_{\mr{tree}}^{\mr{left}} \cup {\sf E}_{\mr{tree}}^{\mr{right}} \bigr ) \cap {\sf N}_{\mr{tree}}
\end{align}
where
\begin{align}
    \begin{aligned}
        {\sf E}_{\mr{tree}}^{\mr{left}} &\coloneqq \{ 0^{i-1} 1 0^{i-1} 1 0^* \mid i \geq 1 \}\\
        {\sf E}_{\mr{tree}}^{\mr{right}} &\coloneqq \{ 0^{i-1}1  0^{i} 1 0^* \mid i \geq 1 \}
    \end{aligned}
\end{align}
Both ${\sf E}_{\mr{tree}}^{\mr{left}}$ and ${\sf E}_{\mr{tree}}^{\mr{right}}$ are context free, 
as can be seen by constructing two PDAs $P_{\mr{left}}$ and $P_{\mr{right}}$ that accept these two languages, respectively. 
(This can also directly be seen from the fact that both languages are essentially of the form $\{a^nb^n \mid n\geq 1\}$.)
$P_{\mr{left}}$ uses its stack to count the number of zeros in front of the first $1$, and then it compares this number against the number of zeros in front of the second $1$. 
If the two numbers coincide and the string contains no further 1, it accepts, else it rejects.
$P_{\mr{right}}$ does the same, except for ignoring the first symbol after the first $1$ if it is $0$ and rejecting if it is $1$. 

Finally, from \eqref{eq:Etree} and the closure properties of context sensitive languages \cite{Ko97},  it follows that ${\sf E}_{\mr{tree}}$  is context sensitive.
Hence, by  \cref{thm:main} \ref{thm:main:ccs}, ${\sf L}_{\mr{tree}}$ is constructive context sensitive.

$\mc{M}_{\mr{tree}}$ is not limited, 
as for any $n\geq 2$, the edge 
\begin{equation}
(2^{n-1}-1,2^{n}-2) \in (E_{\mr{tree}})_{2^n-1}^{\mr{left}}
\end{equation}
 is long-range.  
Hence, by \cref{cor:limited},   ${\sf L}_{\mr{tree}}$ is not constructive context free.

Moreover, ${\sf L}_{\mr{tree}}$ is not context free. This can be proven with the pumping lemma of context free languages \cite{Ko97}. Assume ${\sf L}_{\mr{tree}}$ was context free and let $p$ be its pumping length.
Take $n$ to be the smallest natural number that satisfies $2^n-1 \geq p$. Consider 
\begin{align}
    l = 0^{2^{n}-1} \bullet -^{2^{n}-2} \in {\sf L}_{\mathrm{tree}}
\end{align}
Writing $l=uvwxy$, $v$ must be a non-empty string of $0$s and $x$ a non-empty string of $-$ symbols. 
Then pumping up once yields a string that corresponds to configuration of $2^{n}-1 + k$ spins, where $k \coloneqq \vert v \vert$. As $k\leq p \leq 2^{n}-1$ it follows that
$2^{n}-1+k < 2^{n+1}-1$. Additionally using that  $k>0$ shows that 
$2^{n}-1+k \notin N_{\mathrm{tree}}$ and hence $uv^2wx^2y \notin {\sf L}_{\mathrm{tree}}$. Thus, ${\sf L}_{\mathrm{tree}}$ is not context free.

\section{Conclusions and Outlook}
\label{sec:conclusions}

In this work we have introduced a new complexity measure for Ising models and fully classified Ising models according to it (\cref{thm:main}).  
The complexity measure consists of classifying the decision problem corresponding to the function graph of the Hamiltonian of an Ising model in the Chomsky hierarchy. 

In order to establish this classification, 
we have identified certain properties of interaction graphs of Ising models. 
These properties can be divided into two classes: 
those that capture the complexity of interaction graphs (viz.\ the complexity of the edge language, \cref{def:EdgeLanguage}), 
versus those that capture the scaling of interaction graphs (viz.\ finite, limited and linear, \cref{def:PropIsing}). 

In our main result 
we have unveiled which properties of interaction graphs correspond to which complexity level of an Ising model in a one-to-one manner. 
We have then used the classification of \cref{thm:main} to compute the complexity of the 1d Ising model, the Ising model on ladder graphs, on layerwise complete graphs, 
the 2d Ising model, the all-to-all Ising model, and the Ising model on perfect binary trees (\cref{tab:examples}).  

Among other things, this work raises the question of how the complexity measure provided by classifying  ${\sf L}_{\mc{M}}$ in the Chomsky hierarchy differs from existing complexity measures for spin models, such as the computational complexity of GSE.
Specifically, these two complexity measures seem to have different easy-to-hard thresholds.
Considering Ising models, GSE can be proven to be easy (polytime computable) on planar and hard (NP-complete) on non-planar crystal lattices \cite{Is00}.
Thus, planarity seems to be the key property of Ising models that determines the hardness of GSE.
However, it seems to be unclear to what extent this also applies to Ising models which are not defined on regular lattices.
In contrast, our classification (\cref{thm:main})
reveals different properties, seemingly not related to planarity, that determine the complexity of ${\sf L}_{\mc{M}}$.
In addition, there exist non-planar Ising models (e.g.\ such on ladder graphs) with easy (constructive context free) ${\sf L}_{\mc{M}}$
as well as planar Ising models (e.g.\ the 2d Ising model or the Ising model on perfect binary trees) with hard (constructive context sensitive) ${\sf L}_{\mc{M}}$.
Both observations illustrate that the relation between the two complexity measures is to be further explored.

A different way of comparing the two measures consists of investigating the computational complexity of deciding ${\sf L}_{\mc{M}}$, that is, the time resources a Turing machine needs to decide  ${\sf L}_{\mc{M}}$---this is done in \cite{St21} for general spin models. 
Conversely, one could classify the language of GSE in the Chomsky hierarchy, and thereby unveil the grammar (i.e.\ local structure) of the set of yes instances of the ground state energy problem. 

To what extent can the complexity measure provided by ${\sf L}_{\mc{M}}$ as well as its characterisation be extended beyond homogeneous Ising models.
We expect that a characterisation similar to \cref{thm:main} can be derived for non-homogeneous Ising models, by modifying the edge language such that in addition to the positions of the two interacting spins, each string of it also contains the coupling of the respective interaction e.g.\ via $00\alpha  0 \alpha 0$ instead of $001010$ to encode that the interaction between spins $3$ and $5$ has coupling $\alpha$.
We further expect that ${\sf L}_{\mc{M}}$ can be defined similarly for general spin models instead of Ising models.
We however do not know if these generalisations allow for a similar characterisation of ${\sf L}_{\mc{M}}$.
Considering extensions beyond classical spin models, e.g.\ quantum spin models, we expect that already the first step, i.e.\ the encoding in terms of formal languages might prove difficult. 

It could be also interesting to attempt a similar complexity classification of Ising models based on graph languages and graph grammars instead of their string based counterparts.
Encoding the Hamiltonian of an Ising model as a formal languages enforces a total order of the spins. This could be avoided by using graph languages and graph grammars.
A graph language is a set of graphs 
and graph grammars generalise the production rule of string grammars to operate directly on graphs \cite{Ro97}. 
While encoding Ising models as graph languages thus seems more natural, graph grammars lack the well-studied complexity hierarchy of string grammars. 
Hence, it is not clear how encoding Ising models as graph languages could give rise to a complexity measure.

From a broader perspective, this work ---together with \cite{St21}--- establishes a new connection between spin models and theoretical computer science. 
Among other reasons this connection is motivated by the recent discovery that certain spin models such as the 2d Ising model with fields or the 3d Ising model are universal, i.e.\ can simulate arbitrary other spin models \cite{De16b,De20d}. 
The complexity measure for Ising models introduced in this work might allow for a new characterisation of universal spin models, possibly in terms of the complexity of their languages. 
From a more conceptual perspective, connecting spin models and theoretical computer science might reveal if universal spin models are, in some way, related to universal Turing machines, i.e.\ Turing machines that can simulate the computation of any other Turing machine.

Finally, this work can be seen as a first step in the characterisation of physical systems in terms of grammars.
We think that this approach is meaningful within a much broader context.
Consider for instance the time evolution of a discrete, dynamical system. 
Encoding the transition from configuration $c_1$ (at time $t$) to configuration $c_2$ (at time $t+1$) in terms of a grammar that allows one to derive $c_2$ from $c_1$, the grammar can be seen as capturing the dynamics or the physical interactions of the system.
It seems plausible that the grammar then also encodes crucial properties of the system which might be revealed by using tools from formal language theory.  
We thus also see this work as an invitation to studying general physical systems or processes in the light of grammars.

\section*{Acknowledgements} 
We thank Sebastian Stengele for many discussions, as well as everyone in the group---Tom\'a\v{s} Gonda, Andreas Klingler and Mirte van der Eyden. 
We also thank David B\"ansch for joint work on graph grammars and graph languages. 
We thank Thomas M\"uller,  Hadil Karawani and Sahra Styger from the University of Konstanz for many  discussions about counterfactuals. 
We  acknowledge funding from Austrian Science Fund (FWF) via the FWF START Prize Y1261-N.

\appendix 
\section{Proof of \cref{thm:main}}
\label{sec:proofmain}

Here we prove \cref{thm:main} with one subsection for each statement. 

\subsection{Proof of \cref{thm:main}\ref{thm:main:regular}}
\label{ssec:proofmain1}

If $N_{\mc{M}}$ is finite then so is ${\sf L}_{\mc{M}}$. Thus, ${\sf L}_{\mc{M}}$ is trivially regular \cite{Ko97}. 

In order to prove the ``only if" direction using the pumping lemma for regular languages \cite{Ko97}, 
we prove that an Ising model with infinite $N_{\mc{M}}$ cannot be regular. 
To this end, assume such a ${\sf L}_{\mc{M}}$ was regular and let $p$ be its pumping length. 
As $\mc{M}$ is infinite, there exists a configuration of length $q>p$. Hence, the string
\begin{equation}
0^q\bullet -^e \in {\sf L}_{\mc{M}}
\end{equation}
with $e \coloneqq \vert (E_{\mc{M}})_q \vert $ is contained in ${\sf L}_{\mc{M}}$. 
Note that $e \leq \frac{1}{2}q(q-1)$, and hence, for any $n>\frac{1}{2}q(q-1)+1$ we have $\vert E_n \vert > e$. 
Pumping up $k$ times yields a string of the form $0^{q+kp}\bullet -^e$.
Now choosing $k$ large enough such that 
\begin{align}
    q+kp>\frac{1}{2}q(q-1)+1
\end{align}
this string is not contained in  ${\sf L}_{\mc{M}}$, since $\vert E_{q+kp} \vert > e$, so $H_{\mc{M}}(0^{q+kp})\neq  -e$.
Thus, an infinite Ising model cannot have a regular language.

\subsection{Proof of \cref{thm:main}\ref{thm:main:cf1}}
\label{ssec:proofmain2}

In order to prove that 
\begin{quote}
${\sf L}_{\mc{M}}$ is constructive context free $\iff  \mc{M} $ is linear and ${\sf E}_{\mc{M}} $ is regular
\end{quote}
 we prove the following four statements: 
\begin{enumerate}[label=(\alph*)]

\item \label{thm:main:cf1:i}
${\sf L}_{\mc{M}}$ is context free $ \implies \mc{M} $ is linear 

\item \label{thm:main:cf1:ii}
${\sf L}_{\mc{M}}$ is constructive context free $ \implies  {\sf E}_{\mc{M}} $ is regular

\item \label{thm:main:cf1:iii}
$ \mc{M} $ is linear and ${\sf E}_{\mc{M}} $ is regular $ \implies  \mc{M} $ is limited

\item \label{thm:main:cf1:iv}
$ \mc{M} $ is limited and ${\sf E}_{\mc{M}} $ is regular   $  \implies {\sf L}_{\mc{M}}$ is constructive context free  
\end{enumerate}

Combining \ref{thm:main:cf1:i} and  \ref{thm:main:cf1:ii} (and the fact that constructive context free is included in context free) yields the forward direction of the statement,
and combining \ref{thm:main:cf1:iii} and \ref{thm:main:cf1:iv} yields the other direction.  

Let us now prove each of the statements.

\subsubsection*{\ref{thm:main:cf1:i}  If ${\sf L}_\mc{M}$ is  context free, then $\mc{M}$ is linear}

As by assumption ${\sf L}_{\mc{M}}$ is context free, 
we claim that so is the language containing the configuration of minimal energy for each system size, 
\begin{equation}
({\sf L}_{\mc{M}})_{\mr{min}} \coloneqq \{ 0^n -^{e_n} \mid n \in N_{\mc{M}} \}
\end{equation}
where $e_n \coloneqq \vert (E_{\mc{M}})_n \vert $. 
This holds since $({\sf L}_{\mc{M}})_{\mr{min}}$ can be obtained from ${\sf L}_{\mc{M}}$ by first intersecting with the regular language $0^*\bullet -^*$ and then applying the homomorphism that maps $\bullet $
to the empty string and acts as identity on $\{0,1,+,-\}$. 
Since the class of context free languages is closed both with respect to intersections with regular languages and homomorphisms \cite{Ko97}, this proves the claim that  $({\sf L}_{\mc{M}})_{\mr{min}}$ is context free.

As $({\sf L}_{\mc{M}})_{\mr{min}}$ is context free, its image under the Parikh map  
\begin{equation}
P(0^n -^{e_n})= (n, e_n)
\end{equation}
 is a semilinear subset of $\mathbb{N}^2$, i.e.\ a union of finitely many linear subsets $U_1,  \ldots , U_r \subseteq \mathbb{N}^2 $ \cite{Ko97}.
We now construct a natural number $k$ such that for all $n \in N_{\mc{M}}$, $e_n \leq k n$.
Take any $(n, e_n)$ from the image of $P$.
Then there is an  $i \leq r$ such that  $(n,e_n) \in U_i$. 
As $U_i$ is linear, there exist $u_0 \in \mathbb{N}^2, u_1, \ldots ,u_d\in \mathbb{N}^2\setminus \{(0,0)\}$,
such that any element in $U_i$ can be written as 
\begin{equation}
u_0+\lambda_1u_1+ \ldots +\lambda_du_d
\end{equation} 
with $\lambda_j$ natural numbers. Thus, denoting $u_j= (v_j, w_j)$ we in particular have
\begin{align}
    \frac{e_n}{n} = \frac{w_0+ \lambda_1w_1+ \ldots +\lambda_d w_d}{v_0+ \lambda_1v_1+ \ldots +\lambda_dv_d}
\end{align}
Now note that for any $u_j$ it holds that $v_j$ is strictly positive. 
For assume that $v_j=0$. Then, by the linearity of $U_i$, 
\begin{equation}
0^{n+\lambda   0} \bullet -^{e_n+\lambda w_j}\in {\sf L}_{\mc{M}}
\end{equation} 
so a single spin configuration, $0^n$, would have energies $-^{e_n}$ and  $-^{e_n+\lambda w_j}$. In other words, the relation between spin configuration and energy would no longer be functional.
Moreover, $v_0$ cannot be zero either, by  \cref{def:HamLanguage}.  

To finish the proof, take 
\begin{align}
    k_i \coloneqq \max \Bigl\{ \frac{w_l}{v_j} \mid l,j \leq d \Bigr\}  
\end{align}
Then it is easy to see that
\begin{align}
\frac{e_n}{n} \leq k_i
\end{align}
Defining $k$ to be the maximum taken over $\{k_i \mid i \leq r \}$ shows that for any $n\in N_{\mc{M}}$, $\vert (E_{\mc{M}})_n \vert  \leq k   n$
and hence proves the claim.

\subsubsection*{\ref{thm:main:cf1:ii} If ${\sf L}_{\mc{M}}$ is constructive context free, then ${\sf E}_{\mc{M}}$ is regular}

As ${\sf L}_{\mc{M}}$ is constructive context free there exists a constructive PDA $P$ that accepts ${\sf L}_{\mc{M}}$.
We prove the claim by first using $P$ to construct a second PDA $P_C$ that decides ${\sf E}_{\mc{M}}$, and showing that there exists a finite bound on the stack memory of $P_C$. 
Since a finite stack can be simulated by a FSA (by increasing the number of states),  $P_C$ can be transformed into a FSA, which proves the claim.

So let us consider a potential edge
\begin{equation}
    \langle i, j-1-1, n-j \rangle \coloneqq 0^{i-1}10^{j-i-1}10^{n-j} 
\end{equation}
In order to decide if it is in $ {\sf E}_{\mc{M}}$, $P_C$ computes $C_{\mc{M}}(n,i,j)$ defined in \cref{eqn:corr}, 
by simulating $P$'s computation on  the four input spin configurations 
\begin{equation}
0^{i-1}10^{j-i-1}10^{n-j}, \quad 
0^{i-1}10^{n-i}, \quad 
0^{j-1}10^{n-j}, \quad 
0^n 
\end{equation}
and summing the four energies appropriately. 

We now prove that, since $P$ is constructive, $P_C$ can be taken to be a FSA.
Let $(I_m)_{m=1,\ldots , r_n}$ be the unique partition of $(E_{\mc{M}})_n$ that witnesses that $P$ is constructive. 
At step $m$ of the main iteration of $P$ (cf.\ \cref{def:constructive} \ref{def:constructive:i}), $P$ computes the energy contribution from interactions contained in $I_m$ and adds it to its stack. Consequently, $P_C$ computes 
\begin{align}\label{eqn:corrI}
    \begin{aligned}
    C_{\mc{M}}( n, i,j)_{I_m} \coloneqq &-\frac{1}{4} \bigl[H_{\mc{M}}\vert_{I_m}(0^n) \\
    &+ H_{\mc{M}}\vert_{I_m}(0^{i-1}10^{j-i-1}10^{n-j})\\
    &- H_{\mc{M}}\vert_{I_m}(0^{i-1}10^{n-i})\\
    &- H_{\mc{M}}\vert_{I_m}(0^{j-1}10^{n-j})\bigr ]
    \end{aligned}
\end{align}
and adds the result to its stack.
By \cref{def:constructive} the energy that each $I_m$ contributes is upper bounded by the number of states of $P$, and so in particular it is finite.  
Thus, summing up the four terms of 
\cref{eqn:corrI} can be done in the states of $P_C$, 
and we can assume that $P_C$ only uses its stack to accumulate 
\begin{equation}\label{eq:sumCM}
\sum_{m = 1}^{r_n} C_{\mc{M}}( n, i,j)_{I_m}
\end{equation} 
By construction $C_{\mc{M}}( n, i,j)_{I_m}$ is $+1$ if $(i,j) \in I_m$ and $0$ else.
Thus, a finite stack suffices to compute \eqref{eq:sumCM}, and hence this can be done in the states of $P_C$. 
This makes $P_C$ a FSA.  

Finally, if $n \notin  N_{\mc{M}}$, $P_C$ rejects by construction, as so does $P$. If $n\in N_{\mc{M}}$, $P_C$ accepts the input if and only if $C_{\mc{M}}( n, i,j) = 1$.
So $P_C$ correctly decides ${\sf E}_{\mc{M}}$, which proves that ${\sf E}_{\mc{M}}$ is regular.

\subsubsection*{\ref{thm:main:cf1:iii}  If $\mc{M}$ is   linear  and ${\sf E}_{\mc{M}}$ is regular, then $\mc{M}$ is limited}

We prove that if ${\sf E}_{\mc{M}}$ is regular and $\mc{M}$ is not limited, then $\mc{M}$ cannot be linear.
To this end,  
for any natural number $k$, assuming ${\sf E}_{\mc{M}}$ is regular,  we construct a natural number $l$ such that $\mc{M}$ contains more than $k  l$ edges of length $l$, i.e.\ it is not linear.

By assumption ${\sf E}_{\mc{M}}$ is regular. Let $F$ be a FSA that accepts it and denote the number of states of $F$ by $b$.
Consider an edge $\langle p,q,r \rangle \in {\sf E}_{\mc{M}}$ with $p > b$.
When accepting $\langle p,q,r \rangle $ there has to be at least one state that $F$ enters twice before reaching the first $1$ with its head.
Thus, $F$ contains a loop in its transition rules. 
Denote the number of transitions that are contained in this loop by $w_p$. 
Then, for any natural number $n_p$, 
\begin{equation}
\langle p+n_p   w_p, q,r \rangle \in {\sf E}_{\mc{M}}
\end{equation} 
By a similar reasoning, for an edge $\langle p,q,r \rangle$ with $q>b$, 
 $F$ must enter a loop after reading the first $1$ and before reading the second $1$. 
Denote the length of the corresponding loop by $w_q$. Then for any natural number $n_q$, 
\begin{equation} 
\langle p, q+n_q   w_q,r \rangle \in {\sf E}_{\mc{M}}
\end{equation} 
Similarly, for an edge $\langle p,q,r \rangle$ with $r>b$, $F$ enters a loop after reading the second $1$. Denote the number of transitions in this loop as $w_r$. 
Then for any natural number $n_r$, 
\begin{equation} 
\langle p,q, r+n_r   w_r \rangle \in {\sf E}_{\mc{M}}
\end{equation}

Now, since $\mc{M}$ is not limited, there exists an edge $\langle p,q,r \rangle \in {\sf E}_{\mc{M}}$ with $p,q,r > b$. 
By the above reasoning,
there exist natural numbers $w_p,w_q,w_r$ such that 
for any $n_p,n_q,n_r$ 
\begin{align}
     \langle p+n_p  w_p, q+n_q  w_q, r+n_r  w_r \rangle  \in {\sf E}_{\mc{M}}
\end{align}
Take any natural number $m$ and define 
\begin{align}
    l_m\coloneqq p+q+r+2+m  w_pw_qw_r 
\end{align}
We will now show that for any $k$, choosing $m$ appropriately, there are more than $k   l_m$ words of length $l_m$ and hence $\mc{M}$ is not linear.
To this end, take any $m_p,m_q,m_r$ that satisfy $m_p+m_q+m_r = m$. Then, the edge 
\begin{align}
    \langle p+m_p   w_pw_qw_r, q+ m_q   w_pw_qw_r, r + m_r   w_pw_qw_r\rangle
\end{align}
is contained in ${\sf E}_{\mc{M}}$ and has length  $l_m$.
Thus the number of edges of length $l_m$ is at least as big as the number of triples $(m_p,m_q,m_r)$ that sum to $m$,  
\begin{align}
\begin{aligned}
 & \bigl\vert \{(m_p,m_q,m_r) \mid m_p+m_q+m_r = m\} \bigr\vert \\
     &=\sum_{m_p=0}^{m} \sum_{m_q=0}^{m-m_p} 1 = \frac{1}{2}m^2+\frac{3}{2}m+1
     \end{aligned}
\end{align}
So, while $l_m$ grows linearly with $m$, the number of words of length $l_m$ grows at least quadratically with $m$. Thus,
for any $k\in \mathbb{N}$, choosing $m$
appropriately yields more than $k  l_m$ words of length $l_m$. Hence $\mc{M}$ is not linear.

\subsubsection*{\ref{thm:main:cf1:iv} If $\mc{M}$ is   limited and ${\sf E}_{\mc{M}}$ is regular, then ${\sf L}_{\mc{M}}$ is constructive context free}

We prove the claim by building a constructive PDA that accepts ${\sf L}_{\mc{M}}$.
Let $F$ be a FSA that accepts ${\sf E}_{\mc{M}}$ and let $b$ denote the number of states of $F$.
As a first step, we use $F$ to decompose ${\sf E}_{\mc{M}}$ into $8$ disjoint subsets, represented by eight finite sets \eqref{eq:fantastic8}. 
In the second step, for each of these sets, we build a PDA that computes the energy contribution corresponding to the edges in that set.  
Putting together these contributions shows that  ${\sf L}_{\mc{M}}$ can be recognised by a constructive PDA.

\medskip
\paragraph*{Decomposing ${\sf E}_{\mc{M}}$.} 
Take any edge $\langle p,q,r\rangle \in {\sf E}_{\mc{M}}$.
If $F$ enters a loop when processing $0^p$, 
we can w.l.o.g.\ assume that this loop is irreducible in the sense that it contains each state at most once; 
otherwise we decompose it until it is irreducible.
Denote the number of states of this loop by $w_p$. 
Then we can write $p = v_p + n_p   w_p$ for some natural number $n_p$. 
Note that 
\begin{equation}
\langle v_p+ n   w_p,q,r\rangle \in {\sf E}_{\mc{M}}
\end{equation} for any natural number $n$.
We call $(w_p,v_p)$ the $1$-loop-parameters of $\langle p,q,r\rangle$ ($1$- to indicate that the loop occurs in $p$ and not in $q$ or $r$) and 
say that $\langle p,q,r\rangle$ is $1$-periodic if there exist $1$-loop-parameters $(w_p,v_p)$ and a natural number $n_p$ such that $p = v_p + n_p   w_p$. 
Next, we define the set of $1$-loop-parameters that correspond to valid edges in ${\sf E}_{\mc{M}}$, requiring that all such $1$-loop-parameters describe irreducible loops, 
\begin{equation}
    P_{\mr{l}}\coloneqq \{ (w_p,v_p) \mid 
     (w_p,v_p) \text{ $1$-loop-parameters of ${\sf E}_{\mc{M}}$}\} 
\end{equation}
Note that $v_p,w_p \leq b$ as otherwise the loop would not be irreducible.
Thus, $P_{\mr{l}}$ is a finite set.

If $\langle p,q,r\rangle$ is not $1$-periodic, we say it is $1$-finite. We define 
\begin{multline}
    P_{\mr{f}}\coloneqq \{ p \mid \exists q,r \ \textrm{s.t.} \\
       \ \langle p,q,r \rangle  \in {\sf E}_{\mc{M}}  \text{ and } \ \nexists v_p : \ (p,v_p) \in P_{\mr{l}} \}
\end{multline}
Note that any $p \in P_{\mr{f}}$ must satisfy $p \leq b$, so $P_{\mr{f}}$ is also a finite set.
Note also that, by construction, any edge $\langle p,q,r\rangle  \in {\sf E}_{\mc{M}}$ is either $1$-finite or $1$-periodic, 
i.e.\ either $p \in P_{\mr{f}}$ or $p=w_p  n+v_p$ for a unique $\langle w_p,v_p \rangle  \in P_{\mr{l}}$ and $n \in \mathbb{N}$.

In exactly the same way we define $2$-periodicity, $2$-finiteness, $3$-periodicity and $3$-finiteness of an edge $\langle p,q,r\rangle$, where periodicity or finiteness refers to $q$ and $r$, respectively, as well as $2$-loop-parameters and $3$-loop-parameters and the corresponding sets 
\begin{align}
\begin{aligned}
    Q_{\mr{l}}\coloneqq &\{ (w_q,v_q) \mid  (w_q,v_q) \text{ $2$-loop-paramaters of ${\sf E}_{\mc{M}}$} \}\\
    Q_{\mr{f}}\coloneqq &\{ q \mid  \exists p,r \ \textrm{s.t.} \\
     &   \langle p,q,r \rangle \in E_{\mc{M}} \ \text{and} \ \nexists v_q: \ (q,v_q) \in Q_{\mr{l}} \}\\
    R_{\mr{l}}\coloneqq &\{ (w_r,v_r) \mid  (w_r,v_r) \text{ $3$-loop-paramaters of ${\sf E}_{\mc{M}}$}\}\\
    R_{\mr{f}}\coloneqq &\{ r \mid  \exists p,q \ \textrm{s.t.} \\ 
    &    \langle p,q,r \rangle \in E_{\mc{M}} \ \text{and} \ \nexists v_r: \ (r,v_r) \in R_{\mr{l}} \}
\end{aligned}
\end{align}
In addition, we define sets of combinations of $p,q,r$ that lead to valid edges in ${\sf E}_{\mc{M}}$,  
\begin{align}\label{eq:fantastic8}
    \begin{aligned}
    E_{\mr{fff}} \coloneqq &\{ (p,q,r) \in P_{\mr{f}}\times Q_{\mr{f}} \times R_{\mr{f}} \mid  \langle p,q,r \rangle  \in {\sf E}_{\mc{M}} \} \\
    E_{\mr{lff}}  \coloneqq &\{ ((w_p,v_p), q, r) \in P_{\mr{l}}\times Q_{\mr{f}} \times R_{\mr{f}} \mid
       \langle v_p,q,r \rangle  \in {\sf E}_{\mc{M}}\} \\
    E_{\mr{flf}}  \coloneqq &\{ (p,(w_q,v_q), r) \in P_{\mr{f}}\times Q_{\mr{l}} \times R_{\mr{f}}  \mid
       \langle p,v_q,r\rangle \in {\sf E}_{\mc{M}} \}\\
    E_{\mr{ffl}}  \coloneqq &\{ (p,q,(w_r,v_r)) \in P_{\mr{f}}\times Q_{\mr{f}} \times R_{\mr{l}}  \mid 
        \langle p,q,v_r\rangle \in {\sf E}_{\mc{M}} \}\\
    E_{\mr{llf}}  \coloneqq &\{ ((w_p,v_p),(w_q,v_q), r) \in P_{\mr{l}}\times Q_{\mr{l}} \times R_{\mr{f}}  \mid\\
    &   \langle v_p,v_q,r\rangle \in {\sf E}_{\mc{M}} \}\\
    E_{\mr{lfl}} \coloneqq &\{ ((w_p,v_p),q, (w_r,v_r)) \in P_{\mr{l}}\times Q_{\mr{f}} \times R_{\mr{l}}  \mid\\
    &   \langle v_p,q,v_r\rangle \in {\sf E}_{\mc{M}} \}\\
    E_{\mr{fll}}  \coloneqq &\{ (p,(w_q,v_q), (w_r,v_r)) \in P_{\mr{f}}\times Q_{\mr{l}} \times R_{\mr{l}}  \mid\\
    &   \langle p,v_q,v_r\rangle \in {\sf E}_{\mc{M}} \}\\
    E_{\mr{lll}}  \coloneqq &\{ ((w_p,v_p),(w_q,v_q), (w_r,v_r)) \in P_{\mr{l}}\times Q_{\mr{l}} \times R_{\mr{l}}  \mid \\
    &    \langle v_p,v_q,v_r\rangle \in {\sf E}_{\mc{M}} \}
    \end{aligned} 
\end{align}

Note that each of these sets is finite, and that they are all disjoint.  
Note also that if $\mc{M}$ is limited then $E_{\mr{lll}}$ is empty. 
So this decomposes ${\sf E}_{\mc{M}}$ into $7$ disjoint nonempty subsets. Explicitly, define their union
\begin{align}\label{eq:curlyE}
\mc{E}\coloneqq E_{\mr{fff}} \cup 
              E_{\mr{lff}} \cup 
              E_{\mr{flf}} \cup 
              E_{\mr{ffl}} \cup 
              E_{\mr{llf}} \cup 
              E_{\mr{lfl}} \cup 
              E_{\mr{fll}}
\end{align}
For each such set, any of its elements describes a subset $I_e\subseteq {\sf E}_{\mc{M}}$ of edges of $\mc{M}$.
For $(p,q,r) \in E_{\mr{fff}}$ this is a singleton  $I_{(p,q,r)} = \{\langle p,q,r \rangle \}$, 
but for elements of any set other than $E_{\mr{fff}}$, $I_e$ is infinite. 
For example, for $((w_p,v_p), (w_q,v_q),r) \in E_{\mr{llf}}$ we have 
\begin{equation}
I_{((w_p,v_p), (w_q,v_q),r)} = \{ \langle v_p+n  w_p,v_q+m  w_q,r \rangle \mid n,m \in \mathbb{N} \}
\end{equation}
 In other words, an edge  $\langle p^{\prime},q^{\prime},r^{\prime} \rangle $ is contained in $ I_{((w_p,v_p), (w_q,v_q),r)}$ if and only of 
\begin{align}
    \begin{aligned}
        p^{\prime} &= v_p \mod w_p\\
        q^{\prime} &= v_q \mod w_q \\
        r^{\prime} &= r
    \end{aligned}
\end{align}

By construction, any edge $\langle p^{\prime},q^{\prime},r^{\prime}\rangle \in {\sf E}_{\mc{M}}$ is described by a unique $e\in \mc{E}$.  
Thus, we obtain a partition of ${\sf E}_{\mc{M}}$ into a finite number of disjoint subsets:
\begin{align}
    {\sf E}_{\mc{M}} = \bigcup _{e \in \mc{E} } I_e 
\end{align}

\medskip
\paragraph*{Building the PDAs.} 
In order to build a constructive PDA $P$ that accepts ${\sf L}_{\mc{M}}$,
we build a constructive PDA $P_e$ for every $e \in \mc{E}$.
The idea is the following. 
First, since ${\sf E}_{\mc{M}}$ is regular, well-formedness of the input can easily be checked by a FSA, and thus also be simulated in the states of $P_e$.
So henceforth we shall assume that all inputs are well-formed, i.e.\ of the form 
\begin{equation}
s_1 \ldots s_n\bullet \sigma^k
\end{equation} 
Given a well-formed input, $P_e$ accumulates the energy contributions that correspond to edges in $I_e$ on its stack, that is, 
$P_e$ computes $H_{\mc{M}}\vert_{I_e}(s_1 \ldots s_n)$.
The required constructive PDA $P$ for ${\sf L}_{\mc{M}}$ is  obtained by running all $P_e$s in parallel while providing access to the same stack, so that  $P$ accumulates 
\begin{equation}\label{eq:energy-stack}
\sum_{e \in \mc{E}} H_{\mc{M}}\vert_{I_e}(s_1 \ldots s_n)
\end{equation} 
on its stack.
Since $\mc{E}$ is finite, there is  a finite number of PDAs $P_e$, and hence their parallel simulation can be performed by a PDA, $P$. 
Moreover, since $\bigcup _{e \in \mc{E} } I_e$ is a partition of ${\sf E}_{\mc{M}}$ into disjoint subsets, equation \eqref{eq:energy-stack} equals  $H_{\mc{M}}(s_1 \ldots s_n)$. 
Finally, $P$ compares its stack content to the input energy $\sigma^k$ and accepts if and only if the two values are equal.
All PDAs $P_e$ are built such that $P$ is constructive.

Let us construct the PDAs $P_e$ for any $e\in \mathcal{E}$. 
We will start by considering $e\in E_{\mr{fff}}$, and then continue with the following cases of \eqref{eq:fantastic8}.  

\subparagraph{1.\ The PDA for $E_{\mr{fff}}$.} 
We consider $(p,q,r)\in E_{\mr{fff}}$ and construct the PDA $P_{(p,q,r)}$. 
We have that  $I_{(p,q,r)}=\{\langle p,q,r\rangle \}$  contains an interaction between $s_i$ and $s_j$ if and only if 
\begin{align}
    \begin{aligned}
    i&=p+1 \\
    j&=p+q+2 \\ 
    n&=p+q+r+2        
    \end{aligned}
\end{align}
The PDA starts by reading the first $p+q+r+2$ symbols of its input and storing them in its state.
Then it checks if the next input symbol is $\bullet$.
If yes, then $n=p+q+r+2$, and it adds $h(s_i,s_j)$  to its stack.
The relevant spin values are stored in its states  and the  value $h(s_i,s_j)$ can be hardwired into the transition rules.
 If it reads $\bullet$ during any other step of the computation, it rejects. 

\subparagraph{2.\ The PDA for $E_{\mr{lff}}$.} 
We now consider   $((w_p,v_p), q, r) \in E_{\mr{lff}}$ and construct the PDA $P_{((w_p,v_p), q, r)}$. 
$s_i$ and $s_j$ interact if and only if 
\begin{align}\label{eqn:pff}
\begin{aligned}
    i &= v_p+1 \mod w_p \\
    j &= i+q+1 \\
    n &= j+r
\end{aligned}
\end{align}
If, for a given $n$, \cref{eqn:pff} has a solution, this solution is unique.
Hence, the PDA needs to compute at most one interaction, and works as follows.  
The PDA reads the first $r+q+2$ spin symbols $s_1\ldots s_{r+q+2}$ and stores them in its states. 
Then it iteratively reads the next input symbol, stores it in its states and removes the left-most of the currently stored input symbols---we call this the main iteration.
Note that at any given time, the stored spins are $s_i\ldots s_{i+r+q+1}$.
To test if $i = v_p+1 \mod w_p$, 
it uses a counter $c_{w_p}$ initialised at 1, which is updated as 
\begin{equation} 
c_{w_p}\mapsto c_{w_p}+1 \mod w_p
\end{equation}
at each step of the main iteration.  
Thus, if $i$ solves the first equation of \cref{eqn:pff}, $c_{w_p}= v_p+1$.
If this is the case, the PDA has stored $s_i\ldots s_n$, as by \cref{eqn:pff} $n=i+r+q+1$. 
It then checks if the next input symbol is $\bullet$.
If yes, it  adds $h(s_i,s_j)$ to its stack where, according to \cref{eqn:pff}, $j = n-r$. 
This is possible since at this step of the computation both $s_i$ and $s_j$ are stored in the states of the PDA.
If no, it continues with the main iteration. 
If it reaches $\bullet$ at any other step of the computation, it rejects.

\subparagraph{3.\ The PDA for $E_{\mr{flf}}$.} 
We now consider  $(p,(w_q,v_q),r))\in E_{\mr{flf}}$ and construct the PDA $P_{(p,(w_q,v_q),r))}$.
 $s_i$ and $s_j$ interact if and only if
\begin{align}\label{eqn:fpf}
\begin{aligned}
    i &= p+1 \\
    j &= i+v_q+1 \mod w_q  \\
    n &= j+r
\end{aligned}
\end{align}
The PDA starts by moving its head $p$ symbols to the right. Next it stores $s_{p+1}$ in its states, since according to \cref{eqn:fpf}, $i=p+1$.
It now enters the main iteration: 
It stores the next $r$ spin symbols in its states.
Its head is now placed over $s_{p+r+2}$, and it currently stores $s_i \ldots s_{i+r}$. 
At each step of the main iteration, it reads  the next spin symbol, stores it in its states and deletes the left-most spin symbol from its states.
In addition to that, it uses a counter $c_{w_q}$ that is initialised at
$c_{w_q}=i \mod w_q$. 
At each step of the main iteration the counter is updated as
\begin{equation}
c_{w_q}\mapsto c_{w_q}+1 \mod w_q
\end{equation} 
If during this main iteration the leftmost stored spin symbol is $s_j$ then the counter is $c_{w_q}=j \mod w_q$.
Once it reaches $\bullet$ the leftmost stored spin symbol $s_j$ satisfies $j=n-r$, i.e.\ solves the last equation in \cref{eqn:fpf}. If additionally $c_{w_q}=i+v_q+1$, it also solves the second equation in \cref{eqn:fpf}.
The PDA then adds $h(s_i,s_{n-r})$ to the stack. 
This is possible since both $s_i$ and $s_{n-r}$ are then stored in the states. 
If $c_{w_q}\neq j \mod w_q$, the input is rejected. 
If during any other step of the computation the PDA reaches $\bullet$, the input is rejected.

\subparagraph{4.\ The PDA for $E_{\mr{ffl}}$.} 
For $(p,q,(w_r,v_r)) \in E_{\mr{ffl}}$ we construct the PDA $P_{(p,q,(w_r,v_r))}$. 
$s_i$ and $s_j$ interact if and only if 
\begin{align}\label{eqn:ffp}
\begin{aligned}
    i &= p+1 \\
    j &= i+q+1  \\
    n &= j+v_r \mod w_r
\end{aligned}
\end{align}
Now both spin states $s_i,s_j$ can be read  from the beginning of the spin configuration. 
The PDA starts by storing the first $p+q+2$ spins from its input in its states.
Next it counts if $n=p+q+2+v_r \mod w_r$, again using $w_r$ of its states as a counter modulo $w_r$. 
If yes, it adds $h(s_i,s_j)$ to its stack; if no, it rejects. 

\subparagraph{5.\ The PDA for $E_{\mr{llf}}$.} 
We now consider $((w_p,v_p),(w_q,v_q),r) \in E_{\mr{llf}}$ and construct the PDA $P_{((w_p,v_p),(w_q,v_q),r)}$. 
$s_i$ and $s_j$ interact if and only if 
\begin{align}\label{eqn:ppf}
    \begin{aligned}
    i &= v_p+1  \mod w_p \\
    j &= i+v_q+1 \mod w_q \\
    n &= j+r 
    \end{aligned}
\end{align}
Let $l\coloneqq \mr{lcm}(w_p,w_q)$, $g\coloneqq \mr{gcd}(w_p,w_q)$. 
By the Chinese remainder theorem, if 
\begin{equation}
v_p+1= j-v_q-1 \mod g
 \end{equation}
 \cref{eqn:ppf} has a unique solution modulo $l$, else it has no solution. 
In particular, if there exists a solution, then there is a unique one with $i\leq l$. 
All further solutions $i^{\prime}$ are given as $i^{\prime}=i+m  l$ where
$m$ satisfies 
\begin{equation}\label{eq:iprime}
i+m  l\leq n-r-v_q-1
\end{equation}
The PDA first non-deterministically guesses both $s_{n-r}$ and the unique $i\leq l$ solving \cref{eqn:ppf}. Then it iterates over the input and adds all
$h(s_{i^{\prime}}, s_j)$ for $i^{\prime}=i+m  l$ satisfying \eqref{eq:iprime} to its stack.

More precisely,
it starts by non-deterministically guessing the state of spin $s_{n-r}$.
Next it reads the first $r+1$ symbols of the input and stores them in its states.  Now the main iteration starts.
At each step of the main iteration, the PDA reads the next symbol from the input, stores this symbol in its states and deletes the left-most of its stored symbols.
Additionally, it uses two counters, $c_{w_p}$ and $c_l$. 
Both counters are initialised as one. 
After each step of the iteration,
the counters are updated as 
\begin{align}
    \begin{aligned}
        c_{w_p} & \mapsto c_{w_p}+1 \mod w_p\\
        c_l & \mapsto c_l +1 \mod l
    \end{aligned}
\end{align}
Note that both these counters correspond to the position of the left-most stored spin symbol, i.e.\ when $s_i \ldots s_{i+r}$ are stored then $c_{w_p}=i \mod w_p$ and $c_l= i \mod l$.
The main iteration stops once $c_l=0$.

If, during the main iteration, $c_{w_p}=v_p+1$, the PDA non-deterministically branches into the two options of the current position $i$ either solving or not solving \cref{eqn:ppf}. 
If it  guesses that $i$ does not solve \cref{eqn:ppf}. 
It continues the main iteration with $i+1$.
If it guesses that $i$ solves \cref{eqn:ppf}
it adds $h(s_i,s_{n-r})$ to the stack. 
Next it sets $c_l$  to zero, and sets an additional counter $c_{w_q}$ modulo $w_q$ to zero, too. 
It continues iterating over the remaining input, still updating $c_l$ as before. 
Additionally, it now updates $c_{w_q} \mapsto c_{w_q}+1 \mod w_q$ instead of $c_{w_p}$.
Note that now both counters, $c_l$ and $c_{w_q}$ correspond to the position of the left-most stored spin symbol relative to $s_i$, i.e.\ when $s_{i^{\prime}} \ldots s_{i^{\prime}+r}$ is stored, the counters correspond to $c_l = i^{\prime}-i \mod l$, and similarly for $c_{w_q}$.
If, at any time $c_l = 0$, it adds the corresponding energy $h(s_{i^{\prime}},s_{n-r})$ to the stack, as in that case  $i^{\prime}= i +m  l$ and as $i$ solves \cref{eqn:ppf} so does $i^{\prime}$.
Given that the non-deterministic guess of $i$ solving \cref{eqn:ppf} was right,
the PDA hence accumulates the energy contributions of all solutions of \cref{eqn:ppf} on its stack. 

Finally, once the head reaches $\bullet$, it has $s_{n-r} \ldots s_n$ stored in its states and the second counter yields $c_{w_q}= j-i \mod w_q$.
This allows the PDA  to verify its two non-deterministic guesses. If $c_{w_q}=v_q+1$ and the initial guess of $s_{n-r}$ was correct, it accepts; else it rejects.

\subparagraph{6.\ The PDA for $E_{\mr{lfl}}$.} 
We now consider  $((w_p,v_p), q, ((w_r,v_r)) \in E_{\mr{lfl}}$ and construct the PDA $P_{((w_p,v_p), q, ((w_r,v_r))}$. 
 $s_i$ and $s_j$ interact if and only if 
\begin{align}\label{eqn:pfp}
    \begin{aligned}
        i &= v_p +1 \mod w_p \\
        j &= i + q + 1 \\
        n &= j + v_r \mod w_r
    \end{aligned}
\end{align}
Using again the Chinese remainder Theorem with $l\coloneqq \mr{lcm}(w_p,w_r)$, \cref{eqn:pfp} either has a unique solution modulo $l$, or there exists no solution. 
The PDA can now be built similarly to the previous case,
the only difference is that, as $j=i+q+1$, there is no need to apply non-determinism to obtain $s_j$. 

More precisely, the PDA first traverses the input string,
while keeping track of the current head position, using two modulo counters, $c_{w_p}$ and $c_l$, that are both initialised as $c_{w_p}=c_l=1$. This process stops when $c_l=0$.
Whenever $c_{w_p}=v_p+1$ it non-deterministically guesses if the current position $i$ solves \cref{eqn:pfp}. If no, it continues traversing the input; if yes, it    sets $c_l$ to zero,
reads and stores the next $q+1$ spin symbol $s_i \ldots s_{i+q+1}$. 
Then it adds $h(s_i,s_{i+q+1})$ to its stack and further initialises an additional modulo $w_r$ counter $c_{w_r}$ at zero.
Now it iterates over the remaining input. 
At each step it deletes the leftmost stored spin and  stores the next symbol from the input.
Additionally, the two counters are updated. 

Note that when the stored spins are $s_{i^{\prime}} \ldots s_{i^{\prime}+q+r}$, the values of the two counters are $c_l = i^{\prime}-i \mod l$, $c_{w_r} = i^{\prime}-i \mod w_r$.
While $c_l$ is used to obtain all further solutions $i^{\prime}= i+m   l$ to \cref{eqn:pfp}, $c_{w_r}$ allows the PDA to validate the non-deterministic guess of $i$ being a valid solution.
If at any time $c_l = 0$ the PDA adds $h(s_{i^{\prime}}, s_{i^{\prime}+q+1})$ to the stack. Finally, when it reaches $\bullet$, $c_{w_r}= n-i-q-1 \mod w_r$. 
Hence, $i$ solves \cref{eqn:pfp} if and only if $c_{w_r}=v_r$. 
If this holds, the PDA accepts; else the initial guess was wrong, and it rejects.

\subparagraph{7.\ The PDA for $E_{\mr{fll}}$.} 
We now consider $(p,(w_q,v_q),(w_r,v_q)) \in E_{\mr{fll}}$ and build the PDA $P_{(p,(w_q,v_q),(w_r,v_q))}$.  $s_i$ and $s_j$   interact if and only if 
\begin{align}\label{eqn:fpp}
    \begin{aligned}
    i &= p+1 \\
    j &= i+v_q+1 \mod w_q \\
    n &= j+v_r \mod w_r
    \end{aligned}
\end{align}
As before, let $l\coloneqq \mr{lcm}(w_q,w_r)$. The PDA starts by traversing the input until it reaches $s_{p+1}$. It then stores $s_{p+1}$ in its states.
Next it initialises two modulo counters $c_{w_q}$ and $c_l$ at $c_{w_q}=c_l=1$ and traverses the input until $c_l=0$. Whenever $c_{w_q}=v_q+1$, it non-deterministically guesses if the current
position $j$ solves \cref{eqn:fpp}. 
If no, it continues traversing the input as before; if yes it sets $c_l=0$, initialises a second modulo counter $c_{w_r}$ at zero and adds $h(s_{p+1},s_j)$ to its stack.
The PDA then traverses the remaining input.

Whenever for the current head position $j^{\prime}$, $c_l=0$ it adds  $h(s_{p+1},s_j^{\prime})$ to its stack. Finally, when its head reaches $\bullet$, 
it accepts if  $c_{w_q}=v_r$, and rejects otherwise. 
If at any other occasion its head reaches $\bullet$, it also rejects.

\subsection{Proof of \cref{thm:main} \ref{thm:main:ccs}}

First we prove that if ${\sf L}_{\mc{M}}$ is constructive context sensitive, then ${\sf E}_{\mc{M}}$ is context sensitive.
So let $\mc{M}$ be an Ising model with constructive context sensitive ${\sf L}_{\mc{M}}$, i.e.\ there exists a constructive multitape LBA $M$ that accepts ${\sf L}_{\mc{M}}$.
Similarly to \ref{thm:main:cf1:ii}, 
given $M$ we will build another LBA $M_C$ that on input 
$
0^{i-1}10^{j-1}10^{n-j}
$
computes $C_{\mc{M}}(n,i,j)$.

$M_C$ uses $4$ extra input tapes $T_{\mr{in},1}, \ldots ,T_{\mr{in},4}$ and $4$ energy tapes $T_{E,1}, \ldots ,T_{E,4}$. 
Using its input, $M_C$ writes the $4$ spin configurations
\begin{equation}
0^{i-1}10^{j-1}10^{n-j} , \quad
0^{i-1}10^{n-i}, \quad 
0^{j-1}10^{n-j}, \quad 
0^n
\end{equation} 
to the $4$ extra input tapes. This can be done by copying the input and replacing one or both $1$s with $0$s.
Now $M_C$ simulates $M$ on these $4$ input spin configurations, and the corresponding energies  obtained from $M$ are written to the $4$ energy tapes $T_{E,1}, \ldots ,T_{E,4}$. 
As $M$ is constructive,
it necessarily computes these $4$ energies before even considering any possible input energy $e$. 
Note that, as these $4$ energies are computed in binary, they can be stored within the LBA bounds, since for a spin configuration of length $n$,  the maximal absolute value of the energy is $\binom{n}{2}$. 
Finally, $M_C$ adds the $4$ stored energies; note that addition of binary numbers is possible for a LBA. 
This way, $M_C$ computes $C_{\mc{M}}( n, i,j)$, and it accepts the input if and only if $C_{\mc{M}}( n, i,j)=1$. 

Conversely, 
if ${\sf E}_{\mc{M}}$ is context sensitive, there exists a LBA $M_E$ that accepts ${\sf E}_{\mc{M}}$. 
According to the Immerman–Szelepcsényi theorem there exists another LBA $M_{\Bar{E}}$
that accepts the complement of ${\sf E}_{\mc{M}}$. 
We now build a constructive LBA $M$ that accepts ${\sf L}_{\mc{M}}$ as follows.

$M$ uses a specific tape $T_{S_E}$ to simulate $M_E$ and a second  tape $T_{S_{\Bar{E}}}$ to simulate $M_{\Bar{E}}$. In addition, $M$ has the usual input tape $T_{\mr{in}}$ and energy tape $T_E$. 
On input $s_1 \ldots s_n\bullet e$, 
$M$ iterates over all possible pairs $(i,j)$ that satisfy $i< j \leq n$. 
This can be achieved by marking the input on $T_{\mr{in}}$ appropriately; explicitly, by marking it as
\begin{equation}
s_1^{\prime}s_2^{\prime \prime}s_3 \ldots s_n\bullet e
\end{equation} 
in the beginning, then moving the $s_j^{\prime \prime}$ mark one step to the right after each iteration. 
Once $\bullet$ is reached the $s_j^{\prime \prime}$ mark is removed, $M$ moves the $s_i^{\prime}$ mark one position to the right and marks the spin symbol to the right of it as $s_{i+1}^{\prime \prime} \bullet e$.

At every step of the iteration, with $s_i$ and $s_j$ marked, $M$ copies the entire input spin configuration \begin{equation}
s_1 \ldots s_i^{\prime} \ldots s_j^{\prime \prime} \ldots s_n
\end{equation} 
both to $T_{S_E}$ and $T_{S_{\Bar{E}}}$,
and replaces each unmarked spin with a $0$ and each marked spin with a $1$.
Thereby the edge between $i$ and $j$ is written to these two tapes.

Now $M$ simulates $M_E$ with $T_{S_E}$ as input and $M_{\Bar{E}}$ with $T_{S_{\Bar{E}}}$ as input. 
If $M_E$ accepts, $\mc{M}$ contains an interaction of $s_i$ and $s_j$, 
so $M$ adds $h(s_i,s_j)$ in binary to its energy tape.
If $M_{\Bar{E}}$ accepts, $\mc{M}$ contains no interaction of $s_i$ and $s_j$,  
so $M$ moves to the next pair of spins without adding $h(s_i,s_j)$ to $T_E$.
After that, $M$ clears both $T_{S_E}$ and $T_{S_{\Bar{E}}}$, before it continues with the next step of the iteration.

When the iteration terminates, $M$ has stored $ H_{\mc{M}}(s_1 \ldots s_n)$ in binary on $T_E$ and compares this to the input energy $e$.
If the computed and the input energy coincide, it accepts; otherwise, it rejects.

If  $n \notin N_{\mc{M}}$, the iteration terminates without $M_E$ accepting a single edge, in which case $M$ rejects the input.  
This case can be checked by, prior to the iteration over possible edges, writing a distinguished symbol to the energy tape that is removed once the first energy is added.  

\subsection{Proof of \cref{thm:main} \ref{thm:main:undec}}

The proof follows a similar line of reasoning as that  of \ref{thm:main:ccs}.
If ${\sf L}_{\mc{M}}$ is decidable there exists a Turing machine $M$ that decides it. 
From this we can build a second Turing machine $M_H$ that computes $H_{\mc{M}}$ as a function, i.e.\ that on input
$s_1 \ldots s_n$ with $n \in N_{\mc{M}}$, after a finite number of steps, halts with $H_{\mc{M}}(s_1 \ldots s_n)$ written to its output  tape $T_{\mr{out}}$. This can be achieved as follows:
On input $s_1 \ldots s_n$, $M_H$ iterates over 
\begin{equation}
e \in u\left(\left\{-\binom{n}{2}, \ldots ,\binom{n}{2}\right\}\right)
\end{equation} 
i.e.\ over all possible energies that the input may have. 
For each, $M_H$ writes $s_1 \ldots s_n\bullet e$ to an additional tape $T_S$ and simulates $T$ with $T_S$ as input. As $M$ decides ${\sf L}_{\mc{M}}$,
this simulation halts after a finite number of steps. 
If $M$ accepts $s_1 \ldots s_n\bullet e$, then $e$ is the correct energy, i.e.\ 
\begin{equation}
e = H_{\mc{M}}(s_1 \ldots s_n)
\end{equation} 
and $M_H$ writes $e$ to $T_{\mr{out}}$ and halts.
If $M$ rejects, $M_H$ continues with the next step of the iteration. 
If the iteration terminates without $M_H$ halting, then $n\notin N_{\mc{M}}$ and $M_H$ halts and rejects.

Using $M_H$, we can proceed analogously to the proof of \ref{thm:main:ccs},  in order to build a Turing machine $M_C$ that on input $0^{i-1}10^{j-1}10^{n-j}$ computes $C_{\mc{M}}(n,i,j)$.
So $M_C$ decides ${\sf E}_{\mc{M}}$.

Conversely, if ${\sf E}_{\mc{M}}$ is decidable, as in the proof of \ref{thm:main:ccs} we can build a Turing machine $M$ that on input $s_1 \ldots s_n\bullet e$ iterates over all possible edges $(i,j)$ with $i < j \leq n$ and uses the decider for ${\sf E}_{\mc{M}}$, $M_E$, to decide if $(i,j) \in (E_{\mc{M}})_n$. 
If yes,  $M$ adds $h(s_i,s_j)$ to its energy tape $T_E$ and continues the iteration. 
If no, $M$   continues the iteration without adding the energy.
Once the iteration over edges terminates, $M$ has $H_{\mc{M}}(s_1 \ldots s_n)$ stored on its energy tape. Hence, ${\sf L}_{\mc{M}}$ can be decided by letting $M$ accept the input if and only if there was at least
one edge accepted by $M_E$, to ensure that $n \in N_{\mc{M}}$, and additionally input energy and computed energy are equal.

\section{Formal language theory toolbox}
\label{ssec:FormalLanguage}

Let $\Sigma$ be a finite set, called the alphabet.
Let $\Sigma^*$ denote the free monoid over $\Sigma$, with unit being the empty string $\epsilon$.
In the context of formal language theory $\Sigma^*$ is often called the Kleene star of $\Sigma$. $\Sigma^*$ contains all finite strings, that can be formed with symbols from 
$\Sigma$, including the empty string.
A formal language ${\sf L}$  over $\Sigma$ is a subset of $\Sigma^*$.

While most languages can only be characterised in an extensive way, 
namely by specifying the (infinite) set ${\sf L}\subseteq \Sigma^*$ itself, 
some admit a finite description.
There are two ways of providing this finite description: 
by providing a grammar $G$ that generates ${\sf L}$,
 or by constructing an automaton that accepts ${\sf L}$.

\begin{definition}[Grammar]\label{def:grammar}
A \emph{grammar} is a $4$-tuple $G=(S,T,NT,P)$, where 
\begin{itemize}
    \item $S \in NT$ is a distinguished symbol, called the start symbol of $G$;
    \item $T$, $NT$ are disjoint, finite sets, whose elements are called terminal and non-terminal symbols respectively;
    \item $P\subseteq (T \cup NT)^* \times (T \cup NT)^*$ is a finite set of production rules. For $(\alpha, \beta) \in P$ we write $\alpha \rightarrow \beta $ and for $\{(\alpha, \beta), (\alpha, \beta^{\prime}), (\alpha, \beta^{\prime \prime})\} \subseteq P$ we write $\alpha \rightarrow \beta \mid \beta^{\prime} \mid \beta^{\prime \prime}$.
\end{itemize}
Given a string $w \in (T \cup NT)^*$, a production rule $\alpha \rightarrow \beta \in P$ is applied to it by replacing an occurrence of $\alpha$ as a substring of $w$ with $\beta$.
If a string $w$ can be obtained from another string $w^{\prime}$ by repeated application of production rules of $G$ we say that $w$ can be derived from $w^{\prime}$ by means of $G$, and write $w^{\prime} \Rightarrow^*_G w$.
The language ${\sf L}(G)$ that a grammar $G$ generates is the set of all terminal strings that can be derived from the start symbol $S$, 
\begin{align}
    {\sf L}(G) \coloneqq \{w \in T^* \mid S \Rightarrow_G^* w \} 
\end{align}
\end{definition}
Grammars can be classified according to the form of the production rules they contain. The most famous such classification is the Chomsky hierarchy.
\begin{definition}[Chomsky hierarchy]\label{def:Chomsky}
The \emph{Chomsky hierarchy} is the inclusion hierarchy of formal grammars consisting of the following four types of formal grammars;
\begin{align}
    \text{regular} \subset \text{context free} \subset \text{context sensitive} \subset \text{unrestricted}
\end{align}
where all inclusions are strict \cite{Ch63, Ch65}. 
A grammar $G$ belongs to either of these types if for all production rules $\alpha\to\beta \in P$, the following holds: 
\begin{description}[before={\renewcommand\makelabel[1]{\bfseries ##1}}]
    \item[regular] if $\alpha \in NT$ and 
$\beta = \epsilon$ or 
$\beta \in T$  or 
$\beta = b B$ with $b \in T$ and $B \in NT$; 
    \item[context free] if $\alpha \in NT$; 
    \item[context sensitive] if $aBc \rightarrow adc$, where $a,c \in (T \cup NT)^*$, $B \in NT$, and $\epsilon \neq d \in (T \cup NT)^*$; 
    \item[unrestricted] in any  case.
\end{description}
This hierarchy of grammars can be lifted to a hierarchy of languages, by calling a formal language ${\sf L}$ regular if there exists a regular grammar with ${\sf L} = {\sf L}(G)$, and similar for context free and context sensitive. 
The class of languages corresponding to unrestricted grammars is called recursively enumerable. 
If both a language and its complement are recursively enumerable, then it is called decidable. 
\end{definition}

For every level in the Chomsky hierarchy, there exists a type of automaton (i.e.\ a model of computation) that accepts the languages from that level: 
regular languages are accepted by finite state automata (FSA), 
context free languages are accepted by pushdown automata (PDA),
 context sensitive languages are accepted by linear bounded automata (LBA), 
and recursively enumerable language are accepted by Turing machines (TM).
Proving that a language is accepted by a certain type automaton is equivalent to proving that it is in the corresponding level. 
We now review these automata (see e.g.\ \cite{Ko97,Ro971}).

A FSA can be imagined as a machine with one  tape, and a head that scans one cell of the input tape at a time. 
The FSA has a finite number of states in its head as memory. 
The computation starts with the input written on the  tape and the head placed over the first input symbol.
At each computation step, it reads the symbol that its head is currently placed over, 
and, depending on the symbol on the tape and the current state, transitions to a new state. 
The head then moves to the next input symbol. 
A FSA can neither change the direction of its head movement nor overwrite the tape. 
\begin{definition}(Finite state automaton)\label{def:FSA}
A \emph{finite state automaton} is a $5$-tuple $F = (Q, \Sigma, \delta, q_0, A)$, where 
\begin{itemize}
    \item $Q$ and $\Sigma$ are finite sets called the states and the input alphabet; 
    \item $\delta: Q \times \Sigma \rightarrow Q$ is called the transition function; 
    \item $q_0 \in Q$ is the start state; 
    \item $A\subseteq Q$ are the accept states. 
\end{itemize}
\end{definition}
The transition function encodes one computation step of the FSA $F$: 
When in state $q$ upon reading $s$, $F$ transitions to state $q^{\prime} = \delta(q,s)$.
On input $w_1\ldots w_n \in \Sigma^*$, $F$ starts in state $q_0$. 
It then processes the entire input: for each input symbol it uses $\delta$ and the current state to compute the new state; then it moves on with the next input symbol.
After processing the entire input, 
if $F$ is in a state $f\in A$, the input is accepted by $F$; 
else, $F$ rejects.

A PDA can be imagined as a FSA which additionally has access to a stack. 
Specifically, at each step of the computation, 
the head of the PDA reads the current symbol on the tape, 
pops a symbol from the top of the stack, 
and pushes a finite number of symbols onto the stack. 
Then the head moves to the next symbol.
\begin{definition}[Pushdown automaton]\label{def:PDA}
A \emph{pushdown automaton} is a $7$-tuple $P=(Q, \Sigma_{\rm in}, \Sigma_{\rm stack}, \delta, q_0, Z, A)$, where
\begin{itemize}
    \item $Q,\Sigma_{\rm in}$ and $\Sigma_{\rm stack}$ are finite sets called the states, input alphabet and stack alphabet; 
    \item $\delta \subseteq (Q \times (\Sigma_{\rm in} \cup \{\epsilon\}) \times \Sigma_{\rm stack} ) \times (Q \times \Sigma_{\rm stack}^*)$ is a finite set called the transition relation;  
    \item $q_0 \in Q$ is the initial state; 
    \item $Z \in \Sigma_{\rm stack}$ is the initial stack symbol; 
    \item $A \subseteq Q$ are the accept states. 
\end{itemize}
\end{definition}
The transition relation models one step of the computation:
When in state $q$, upon reading $x$ and popping $s$, $P$ transitions to state $q^{\prime}$ and pushes $s^{\prime}$ to the stack, where $q^{\prime}$ and $s^{\prime}$ are such that 
\begin{equation}
(q,x,s, q^{\prime}, s^{\prime}) \in \delta
\end{equation}
Then $P$ moves its head to the next input symbol.
If there are multiple such $q^{\prime}$, $s^{\prime}$, 
$P$ branches its computation to pursue all such options simultaneously. 

$P$ is called deterministic if for any $q \in Q$, $x \in \Sigma_{\rm in}$ and $s \in \Sigma_{\rm stack}$, there exist a unique $q^{\prime} \in Q$, $s^{\prime} \in \Sigma_{\rm stack}^*$, such that either 
\begin{equation}
(q,x,s,q^{\prime}, s^{\prime})\in \delta \quad \text{or} \quad (q,\epsilon,s,q^{\prime}, s^{\prime})\in \delta
\end{equation}
Otherwise $P$ is called non-deterministic.

An input string $w_1\ldots w_n \in \Sigma_{\rm in}^*$ is accepted by $P$ if 
with its head placed over the first symbol $w_1$, in state $q_0$ and with its stack containing only one symbol $Z$, after processing the entire input as dictated by $\delta$,
there is at least one computation path that leads to a state $f \in A$.
Otherwise, the input is rejected by $P$.

The most powerful notion of machine that we will encounter in this work is that of a Turing machine (TM). 
A TM can be imagined as a machine with a finite number of states and 
 an input  tape. In contrast to a PDA, a TM can overwrite the input tape, and move left or right. 
\begin{definition}[Turing machine]\label{def:TM}
A \emph{Turing machine} (TM) is a $7$-tuple $M = (Q, \Sigma_{\rm in}, \Sigma_{\rm tape}, \delta, q_0, A, B)$, where 
\begin{itemize}
    \item $Q,\Sigma_{\rm tape}$ are finite sets called the states and the tape alphabet; 
    \item $\Sigma_{\rm in} \subseteq \Sigma_{\rm tape}$ is  the input alphabet; 
    \item $\delta \subseteq (Q \times \Sigma_{\rm tape}) \times (Q\times \Sigma_{\rm tape} \times \{L,R\})$ is the  transition relation; 
    \item $q_0 \in Q$ is the start state; 
    \item $A \subseteq Q$ are the final states; 
    \item $B \in \Sigma_{\rm tape}$ is the blank symbol that represents an empty input cell. 
\end{itemize}
\end{definition}
When in state $q$ upon reading $s$, $M$ transitions to state $q^{\prime}$, overwrites $s$ with $s^{\prime}$ and moves its head one step in the direction specified by $m \in \{L,R\}$, where $(q^{\prime}, s^{\prime}, m)$ are specified by $\delta$, 
\begin{equation} \label{eq:transitionTM}
(q,s,q^{\prime}, s^{\prime}, m)\in \delta
\end{equation} 
If there are multiple such options $M$ branches its computation path to carry them out  simultaneously.

$M$ is called deterministic if for each $(q,s) \in Q \times  \Sigma_{\rm tape}$ there exists at most one $(q^{\prime}, s^{\prime}, m) \in Q \times \Sigma_{\rm tape} \times \{L,R\}$
such that \eqref{eq:transitionTM} holds. 
Otherwise, $M$ is called non-deterministic.

If for a given state and input symbol $(q,s)$  there is no  $(q^{\prime}, s^{\prime}, m)$ satisfying \eqref{eq:transitionTM}, $M$ is said to halt in state $q$.
An input string $w_1\ldots w_n \in \Sigma_{\rm in}$ is  accepted by $M$ if $M$ when started in state $q_0$ with $w_1\ldots w_n$ written on its input tape and its head placed over the first cell, after repeatedly performing the transitions as specified by $\delta$,
after a finite number of steps there is at least one computation path that leads to $M$ halting in a final state.

Whereas a TM may use an unbounded amount of tape to carry out the computation, for the weaker notion of a linear bounded automaton (LBA) the accessible tape  is limited to the cells which are initially used by the input string.
\begin{definition}[Linear bounded automaton]\label{def:LBA}
A \emph{linear bounded automaton} is a $9$-tuple $L = (Q, \Sigma_{\rm in}, \Sigma_{\rm tape}, \delta, q_0, A, B, \bot_L, \bot_R)$, where 
\begin{itemize}
    \item $(Q, \Sigma_{\rm in}, \Sigma_{\rm tape}, \delta, q_0, A, B)$ is a Turing machine
    \item $\bot_L, \bot_R \in \Sigma_{\rm tape}$ are two special symbols that satisfy
    \begin{align}
        \begin{aligned}
            (q,\bot_L,q^{\prime}, s^{\prime}, m)\in \delta & \Rightarrow s^{\prime} = \bot_L \ \text{and} \ m = R \\
             (q,\bot_R,q^{\prime}, s^{\prime}, m)\in \delta & \Rightarrow s^{\prime} = \bot_R \ \text{and} \ m = L 
        \end{aligned}
    \end{align}
\end{itemize}
\end{definition}
The special symbols $\bot_L, \bot_R$ serve as left and right endmarkers of the tape. 
Throughout the  computation, $L$ neither overwrites these endmarkers nor moves its head past them. 
Other than  that, the computation works exactly like that of a TM. 

Relaxing the definition of the LBA such that the accessible tape space is a linear function of the input length, or allowing the LBA to perform its computation on multiple tapes does not change the class of problems it can solve  \cite[Theorem 12]{Ho79}. 
Hence, the class of  context sensitive languages is identical with the complexity class \textsc{NLINSPACE} of  problems that can be solved in linear space on a non-deterministic Turing machine  \cite[Theorem 3.33]{Ro05}. 

Finally, we stress that most languages do not have a grammar (or, equivalently, are not recognised by a Turing machine), as there are uncountably many languages but countably many grammars (or Turing machines). Explicitly, the number of languages over a finite alphabet $\Sigma$  is $|\wp(\Sigma^*)| = 2^{|\mathbb{N}|}$, whereas the number of grammars (or Turing machines) is $\vert \mathbb{N} \vert$.

\end{document}